\setlist[itemize]{label=$\circ$}
\setlist[description]{labelindent=\parindent}
\newtheorem{theorem}{Theorem}
\newtheorem{lemma}[theorem]{Lemma}
\newtheorem{corollary}[theorem]{Corollary}
\theoremstyle{empty}
\theoremstyle{plain}
\newtheorem{definition}[theorem]{Definition}
\theoremstyle{nonumberplain}
\newtheorem{proof}{Proof}
\DeclarePairedDelimiter\paren{\lparen}{\rparen}
\DeclarePairedDelimiter\set{\{}{\}}
\DeclarePairedDelimiterX\setc[2]{\{}{\}}{\,#1 \;\delimsize\vert\; #2\,}
\DeclarePairedDelimiterX\Prc[2]{\Pr\lparen}{\rparen}{\,#1 \;\delimsize\vert\; #2\,}
\newcommand{\cc}[1]{\ensuremath{\mathsf{#1}}}
\newcommand{\op}[1]{\ensuremath{\operatorname{#1}}}
\renewcommand{\P}{\cc{P}}
\newcommand{\Ppoly}{\cc{P/poly}}
\newcommand{\NP}{\cc{NP}}
\newcommand{\NPpoly}{\cc{NP/poly}}
\newcommand{\coNP}{\cc{coNP}}
\newcommand{\coNPpoly}{\cc{coNP/poly}}
\newcommand{\AM}{\cc{AM}}
\newcommand{\coAM}{\cc{coAM}}
\newcommand{\poly}{\op{poly}}
\newcommand{\Exp}{\mathop{\mathbf{E}}}
\newcommand{\N}{\mathbb{N}}
\newcommand{\dotcup}{\mathbin{\dot\cup}}
\newcommand{\zo}{\set{0,1}}
\newcommand{\SAT}{\mathrm{SAT}}
\newcommand{\SD}{\mathrm{SD}^{\geq \Delta}_{\leq \delta}}
\newcommand{\sdist}{d}
\newcommand{\KLdiv}{D_{\mathrm{KL}}}
\newcommand{\distD}{\mathcal D}
\newcommand{\Un}{\mathcal U}
\DeclareMathOperator{\supp}{supp}
\newcommand{\domset}{\mathcal D}
\newcommand{\GITDATE}{September 23, 2014}
\begin{document}

\title{\LARGE AND-compression of NP-complete problems:
  Streamlined proof and minor observations}
\author{Holger Dell%
  \texorpdfstring{%
    \\
    \begin{minipage}{.9\textwidth}
      \begin{center}
        \small
        \mbox{}\\
        \textit{Saarland University}\\
        \textit{Cluster of Excellence, MMCI}%
    \thanks{%
      work done as a postdoc at
      LIAFA, Universit\'e Paris Diderot
    }%
      \end{center}
    \end{minipage}
  }{}%
}
\date{\small\GITDATE}

\maketitle

\begin{abstract}
  \textcite{Drucker} proved the following result: Unless the unlikely
complexity-theoretic collapse $\coNP\subseteq\NPpoly$ occurs, there
is no \emph{AND-compression} for SAT.
The result has implications for the compressibility and
kernelizability of a whole range of NP-complete parameterized
problems.
We present a streamlined proof of Drucker's theorem.

An AND-compression is a deterministic polynomial-time algorithm that
maps a set of SAT-instances $x_1,\dots,x_t$ to a
single SAT-instance $y$ of size $\poly(\max_i|x_i|)$
such that $y$ is satisfiable if and only if all $x_i$ are satisfiable.
The ``AND'' in the name stems from the fact that the predicate ``$y$
is satisfiable'' can be written as the AND of all predicates ``$x_i$
is satisfiable''.
Drucker's theorem complements the result by \textcite{BDFH} and
\textcite{FortnowSanthanam}, who proved the analogous statement for
OR-compressions, and Drucker's proof not only subsumes their result but
also extends it to \emph{randomized} compression algorithms that are
allowed to have a certain probability of failure.

\textcite{Drucker} presented two proofs: The first uses information theory and
the minimax theorem from game theory, and the second is an elementary, iterative
proof that is not as general.
In our proof, we realize the iterative structure as a generalization of the
arguments of \textcite{Ko} for $\P$-selective sets, which use the fact that
tournaments have dominating sets of logarithmic size.
We generalize this fact to hypergraph tournaments.
Our proof achieves the full generality of Drucker's theorem, avoids the minimax
theorem, and restricts the use of information theory to a single, intuitive
lemma about the average noise sensitivity of compressive maps.
To prove this lemma, we use the same information-theoretic inequalities as
Drucker.

\end{abstract}

\section{Introduction}

The influential ``OR-conjecture'' by \textcite{BDFH} asserts that $t$ instances 
$x_1,\dots,x_t$ of $\SAT$ cannot be mapped in polynomial time to an instance $y$ 
of size $\poly(\max_i |x_i|)$ so that $y$ is a yes-instance if and only if at 
least one $x_i$ is a yes-instance.
Conditioned on the OR-conjecture, the ``composition framework'' of 
\textcite{BDFH} has been used to show that many different problems in 
parameterized complexity do not have polynomial kernels.
\textcite{FortnowSanthanam} were able to prove that the OR-conjecture holds 
unless $\coNP\subseteq\NPpoly$, thereby connecting the OR-conjecture with a 
standard hypothesis in complexity theory.

The results of
\textcite{BDFH,FortnowSanthanam}
can be used not only to rule out deterministic kernelization 
algorithms, but also to rule out randomized kernelization algorithms 
with one-sided error, as long as the success probability is bigger 
than zero; this is the same as allowing the kernelization algorithm to 
be a $\coNP$-algorithm.
Left open was the question whether the complexity-theoretic hypothesis 
$\coNP\not\subseteq\NPpoly$ (or some other hypothesis believed by 
complexity theorists) suffices to rule out kernelization algorithms 
that are randomized and have two-sided error.
\textcite{Drucker} resolves this question affirmatively; his results 
can rule out kernelization algorithms that have a constant gap in 
their error probabilities.
This result indicates that randomness does not help to decrease the 
size of kernels significantly.

With the same proof, \textcite{Drucker} resolves a second important question: 
whether the ``AND-conjecture'', which has also been formulated by 
\textcite{BDFH} analogous to the OR-conjecture, can be derived from existing 
complexity-theoretic assumptions.
This is an intriguing question in itself, and it is also relevant for 
parameterized complexity as, for some parameterized problems, we can rule out 
polynomial kernels under the AND-conjecture, but we do not know how to do so 
under the OR-conjecture.
\textcite{Drucker} proves that the AND-conjecture is true if 
$\coNP\not\subseteq\NPpoly$ holds.

The purpose of this paper is to discuss Drucker's theorem and its 
proof.
To this end, we attempt to present a simpler proof of his theorem.
Our proof in \S\ref{sec:proof} gains in simplicity with a small loss in 
generality: the bound that we get is worse than Drucker's bound by a factor of 
two.
Using the slightly more complicated approach in \S\ref{sec:proof2}, it 
is possible to get the same bounds as Drucker.
These differences, however, do not matter for the basic version of the 
main theorem, which we state in~\S\ref{sec mainthm} and further 
discuss in~\S\ref{sec comparison}.
For completeness, we briefly discuss a formulation of the composition 
framework in~\S\ref{sec:nopoly}.

\subsection{Main Theorem: Ruling out OR- and AND-compressions}
\label{sec mainthm}

An AND-compression~$A$ for a language $L\subseteq\zo^*$ is a 
polynomial-time reduction that maps a set $\set{x_1,\dots,x_t}$ to 
some instance $y\doteq A\paren[\big]{\set{x_1,\dots,x_t}}$ of a 
language~$L'\subseteq\zo^*$ such that
$y\in L'$ holds if and only if $x_1\in L$ and $x_2\in L$ and $\dots$ 
and $x_t\in L$.
By De Morgan's law, the same $A$ is an OR-compression for 
$\overline{L}\doteq\zo^*\setminus L$ because $y\in \overline{L'}$
holds if and only if
$x_1\in \overline{L}$
or
$x_2\in \overline{L}$
or $\dots$ or
$x_t\in \overline{L}$.
\textcite{Drucker} proved that an OR-compression for~$L$ implies that 
$L\in\NPpoly\cap\coNPpoly$, which is a complexity consequence that is 
closed under complementation, that is, it is equivalent to 
$\overline{L}\in\NPpoly\cap\coNPpoly$.
For this reason, and as opposed to earlier work 
\parencite{BDFH,FortnowSanthanam,DellVanMelkebeek}, it is without loss of 
generality that we restrict our attention to OR-compressions for the remainder 
of this paper.
We now formally state Drucker's theorem.

\begin{theorem}[Drucker's theorem]\label{thm main simple}
  Let $L,L'\subseteq\zo^*$ be languages, let $e_s,e_c \in[0,1]$ be 
  error probabilities with $e_s+e_c<1$, and let $\epsilon>0$.
  Assume that there exists a randomized polynomial-time algorithm~$A$ 
  that maps any set $x=\set{x_1,\dots,x_t}\subseteq\zo^n$
  for some $n$ and $t$
  to $y=A(x)$ such that:
  \begin{description}[font=$\circ$ \normalfont]
    \item[(Soundness)]
      If all $x_i$'s are no-instances of~$L$, then~$y$ is a 
      no-instance of~$L'$ with probability $\geq 1-e_s$.
    \item[(Completeness)]
      If exactly one $x_i$ is a yes-instance of~$L$, then~$y$ is a 
      yes-instance of~$L'$ with probability~$\geq 1-e_c$.
    \item[(Size bound)]
      The size of~$y$ is bounded by $t^{1-\epsilon} \cdot \poly(n)$.
  \end{description}

  Then $L\in\NPpoly \cap \coNPpoly$.
\end{theorem}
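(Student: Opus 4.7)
The plan is to combine an information-theoretic noise-sensitivity lemma with a combinatorial hypergraph-tournament argument, feeding the resulting small advice into verifiers that place~$L$ in $\NPpoly\cap\coNPpoly$. Before anything else, I would prove the core lemma: for any distribution~$\mathcal{X}$ on~$\zo^n$, if $X_1,\dots,X_t$ are iid from~$\mathcal{X}$, $i\in[t]$ is uniformly random, and $X_i'$ is an independent fresh copy, then the expected statistical distance between $A(X_1,\dots,X_t)$ and $A(X_1,\dots,X_{i-1},X_i',X_{i+1},\dots,X_t)$ is $O(\sqrt{s/t})$, where $s=t^{1-\epsilon}\poly(n)$ bounds the output length of~$A$. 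The ingredients are standard: the output carries at most~$s$ bits of information about the input, so via a Han-type inequality $\sum_i I(A(X);X_i\mid X_{\neq i})\leq s$, and Pinsker's inequality converts per-coordinate mutual information into a statistical-distance bound.

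Second, I would apply this lemma with $\mathcal{N}$ the uniform distribution on the length-$n$ no-instances of~$L$ (nonempty without loss of generality). Soundness forces $A(\mathcal{N}^t)\notin L'$ with probability $\geq 1-e_s$, while completeness forces that replacing any one coordinate by a yes-instance~$x$ puts the output in~$L'$ with probability $\geq 1-e_c$. The two output distributions therefore differ in statistical distance by at least $\delta\doteq 1-e_s-e_c>0$, whereas the noise-sensitivity lemma bounds the same quantity by $O(\sqrt{s/t})$ on average when the replacement is itself drawn from~$\mathcal{N}$. Choosing~$t$ a sufficiently large polynomial in~$n$ (depending on~$\epsilon$ and~$\delta$) forces the two bounds apart, so yes-instances are sharply distinguishable from typical no-instances by the action of~$A$.

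Third, I would turn this average-case distinguishability into a worst-case, small-advice statement via a hypergraph tournament on the length-$n$ instances: for each $t$-tuple, record which instance~$A$ ``flags'' as a yes-candidate in the sense above. Generalizing Ko's argument for $\P$-selective sets (where ordinary tournaments on~$N$ vertices are known to have dominating sets of size $O(\log N)$), I would show that every such hypergraph tournament admits a dominating set of size $\poly(n)$. This $\poly(n)$-sized collection, read as a list of $(t-1)$-tuples of no-instances, is the non-uniform advice; a standard reduction that compares the distribution of $A(x,\vec{w})$ with the reference distribution $A(\mathcal{N}^t)$ by sampling then places~$L$ in both $\NPpoly$ and $\coNPpoly$.

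The hard part will be the combinatorial third step: the hypergraph tournament must be defined so that ``domination'' really tracks the statistical-distance gap of the second step, and the dominating-set bound must be strong enough to yield polynomial advice while still overwhelming the $O(\sqrt{s/t})$ noise-sensitivity error. Step~1 is a self-contained information-theory exercise and the verifier design is essentially bookkeeping; the novelty, and all the leverage, lie in pushing Ko's tournament argument into the hypergraph setting.
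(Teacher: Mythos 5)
Your proposal follows the same skeleton as the paper: an information-theoretic noise-sensitivity lemma (proved via a Han/chain-rule bound on mutual information plus Pinsker's inequality), then a hypergraph-tournament argument generalizing Ko's $\P$-selective technique to produce polynomial advice, and finally a $\Ppoly$-reduction to the statistical distance problem, which sits in $\NPpoly\cap\coNPpoly$. That much is the right plan. However, there is a genuine gap in your third step: you propose advice consisting of $(t-1)$-tuples of no-instances and then a reduction that ``compares the distribution of $A(x,\vec{w})$ with the reference distribution $A(\mathcal{N}^t)$ by sampling.'' But $\mathcal{N}$ is the uniform distribution over \emph{all} length-$n$ no-instances of an $\NP$-hard language, which is not efficiently samplable; hence $A(\mathcal{N}^t)$ cannot be encoded as the circuit required by a query to the statistical-distance oracle. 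The paper circumvents this by making the advice set itself supply the randomness: the SD-query for an advice tuple $g\in\domset$ compares $A(\Un_{2^g})$ with $A(\Un_{2^g}\cup\{v\})$, where $\Un_{2^g}$ is the uniform distribution on subsets of the fixed, known set $g$ --- a distribution that is trivially samplable by a polynomial-size circuit with $g$ and $v$ hard-wired. Correspondingly, the noise-sensitivity lemma must be applied \emph{per advice tuple} to the mapping $b\in\zo^t\mapsto A(\{v_i : b_i=1\})$, not globally to $\mathcal{N}^t$. Your phrasing ``by sampling'' treats exactly the point where the construction is subtle as if it were routine.

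A second, smaller issue: you say the tournament ``flags'' the instance that looks like a yes-candidate, i.e., the one whose replacement induces large statistical distance. The tournament is defined entirely on no-instances, and what the noise-sensitivity lemma actually guarantees is that some element induces \emph{small} distance. The selector must pick that element; being selected then certifies that the element is a no-instance (since by the soundness/completeness contrast a yes-instance \emph{always} induces large distance). If you instead flag the large-distance element, ``being dominated'' no longer correlates with being a no-instance, and the reduction's correctness breaks. So the selector's sign must be inverted relative to what you wrote, and the domination test for a candidate $v$ should check whether there is some $g\in\domset$ with $v\in g$ or with small statistical distance between $A(\Un_{2^g})$ and $A(\Un_{2^g}\cup\{v\})$.
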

The procedure~$A$ above does not need to be a ``full'' OR-compression, 
which makes the theorem more general.
In particular, $A$ is \emph{relaxed} in two ways:
it only needs to work, or be analyzed, in the case that all input 
instances have the same length; this is useful in hardness of kernelization 
proofs as it allows similar instances to be grouped together.
Furthermore, $A$ only needs to work, or be analyzed, in the case that 
at most one of the input instances is a yes-instance of~$L$;
we believe that this property will be useful in future work on 
hardness of kernelization.

The fact that ``relaxed'' OR-compressions suffice in Theorem~\ref{thm 
  main simple} is implicit in the proof of \textcite{Drucker}, but not 
stated explicitly.
Before Drucker's work, \textcite{FortnowSanthanam} proved the special 
case of Theorem~\ref{thm main simple} in which $e_c=0$, but they only 
obtain the weaker consequence $L\in\coNPpoly$, which prevents their 
result from applying to AND-compressions in a non-trivial way.
Moreover, their proof uses the full completeness requirement and does 
not seem to work for relaxed OR-compressions.

\subsection{Comparison and overview of the proof}
\label{sec comparison}

The simplification of our proof stems from two main sources:
\begin{enumerate*}
\item
  The ``scaffolding'' of our proof, its overall structure, is more modular and 
  more similar to arguments used previously by \textcite{Ko}, 
  \textcite{FortnowSanthanam}, and \textcite{DellVanMelkebeek} for
  compression-type procedures and \textcite{isolation} for isolation 
  procedures.
\item
  While the information-theoretic part of our proof uses the same set 
  of information-theoretic inequalities as Drucker's, the simple 
  version in \S\ref{sec:proof} applies these inequalities to 
  distributions that have a simpler structure.
  Moreover, our calculations have a somewhat more mechanical nature.
\end{enumerate*}

Both Drucker's proof and ours use the relaxed OR-compression~$A$ to design a 
$\Ppoly$-reduction from $L$ to the \emph{statistical distance problem}, which is 
known to be in the intersection of $\NPpoly$ and $\coNPpoly$ by previous work 
(cf.~\textcite{XiaoThesis}).
\textcite{Drucker} uses the minimax theorem and a game-theoretic 
sparsification argument to construct the polynomial advice of the 
reduction.
He also presents an alternative proof~\parencite[Section 3]{Druckerfull} in 
which the advice is constructed without these arguments and also without any 
explicit invocation of information theory;
however, the alternative proof does not achieve the full generality of 
his theorem, and we feel that avoiding information theory entirely 
leads to a less intuitive proof structure.
In contrast, our proof achieves full generality up to a factor of two 
in the simplest proof, it avoids game theoretic arguments, and it 
limits information theory to a single, intuitive lemma about the 
average noise sensitivity of compressive maps.

Using this information-theoretic lemma as a black box, we design the 
$\Ppoly$-reduction in a purely combinatorial way:
We generalize the fact that tournaments have dominating sets of 
logarithmic size to \emph{hypergraph tournaments}; these are complete 
$t$-uniform hypergraphs with the additional property that, for each 
hyperedge, one of its elements gets ``selected''.
In particular, for each set~$e\subseteq\overline{L}$ of $t$ 
no-instances, we select one element of~$e$ based on the fact 
that~$A$'s behavior on~$e$ somehow proves that the selected instance 
is a no-instance of~$L$.
The advice of the reduction is going to be a small dominating set of 
this hypergraph tournament on the set of no-instances of~$L$.
The crux is that we can efficiently test, with the help of the 
statistical distance problem oracle, whether an instance is dominated 
or not.
Since any instance is dominated if and only if it is a no-instance 
of~$L$, this suffices to solve~$L$.

In the information-theoretic lemma, we generalize the notion of average noise 
sensitivity of Boolean functions (which can attain two values) to compressive 
maps (which can attain only relatively few values compared to the input length).
We show that compressive maps have small average noise sensitivity.
Drucker's ``distributional stability'' is a closely related notion, which we 
make implicit use of in our proof.
Using the latter notion as the anchor of the overall reduction, 
however, leads to some additional technicalities in Drucker's proof, 
which we also run into in~\S\ref{sec:proof2} where we obtain the same 
bounds as Drucker's theorem.
In~\S\ref{sec:proof} we instead use the average noise sensitivity as 
the anchor of the reduction, which avoids these technicalities at the 
cost of losing a factor of two in the bounds.

\subsection{Application: The composition framework for ruling out 
  \texorpdfstring{$\bm{O(k^{d-\epsilon})}$}{fixed polynomial} kernels}
\label{sec:nopoly}

We briefly describe a modern variant of the composition framework that 
is sufficient to rule out kernels of size $O(k^{d-\epsilon})$ using 
Theorem~\ref{thm main simple}.
It is almost identical to Lemma~1 of~\textcite{DellMarx,DellVanMelkebeek}
and the notion defined by \textcite[Definition~2.2]{HermelinWu}.
By applying the framework for unbounded~$d$, we can also use it to 
rule out polynomial kernels.
\begin{definition}\label{def:composition}
  Let $L$ be a language, and let $\Pi$ with parameter $k$ be a 
  parameterized problem.
  A \emph{$d$-partite composition of $L$ into $\Pi$} is a 
  polynomial-time algorithm~$A$ that maps any set 
  $x=\set{x_1,\dots,x_t}\subseteq\zo^n$ for some $n$ and $t$ to 
  $y=A(x)$ such that:
  \begin{enumerate}[label=(\arabic*)]
    \item\label{composition soundness}
      If all $x_i$'s are no-instances of $L$, then $y$ is a 
      no-instance of $\Pi$.
    \item\label{composition completeness}
      If exactly one $x_i$ is a yes-instance of $L$, then $y$ is a 
      yes-instance of $\Pi$.
    \item\label{composition size}
      The parameter $k$ of $y$ is bounded by $t^{1/d+o(1)}\cdot
      \poly(n)$.
  \end{enumerate}
\end{definition}
This notion of composition has one crucial advantage over previous 
notions of OR-composition:
The algorithm $A$ does not need to work, or be analyzed, in the case 
that two or more of the $x_i$'s are yes-instances.
\begin{definition}
  Let $\Pi$ be a parameterized problem.
  We call $\Pi$ \emph{$d$-compositional} if there exists an $\NP$-hard 
  or $\coNP$-hard problem~$L$ that has a $d$-partite composition 
  algorithm into $\Pi$.
\end{definition}
The above definition encompasses both AND-compositions and OR-compositions 
because an AND-composition of $L$ into $\Pi$ is the same as an OR-composition of 
$\overline L$ into~$\overline \Pi$.
We have the following corollary of Drucker's theorem.
\begin{corollary}
  If $\coNP\not\subseteq \NPpoly$, then no $d$-compositional problem 
  has kernels of size $O(k^{d-\epsilon})$.
  Moreover, this even holds when the kernelization algorithm is 
  allowed to be a randomized algorithm with at least a constant gap in 
  error probability.
\end{corollary}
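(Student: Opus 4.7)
The plan is to derive a contradiction from the conjunction of (i) $\Pi$ being $d$-compositional, (ii) $\Pi$ having kernels of size $O(k^{d-\epsilon})$, and (iii) $\coNP\not\subseteq\NPpoly$. By composing the $d$-partite composition with the kernelization, I would produce a relaxed OR-compression for some $\NP$-hard or $\coNP$-hard language $L$ that meets the hypotheses of Theorem~\ref{thm main simple}; that theorem then forces $L\in\NPpoly\cap\coNPpoly$, collapsing $\coNP$ into $\NPpoly$ and contradicting~(iii).

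First I would unpack $d$-compositionality of $\Pi$ to obtain the $\NP$-hard or $\coNP$-hard language $L$ together with a $d$-partite composition $A$ that maps an input set $x=\set{x_1,\dots,x_t}\subseteq\zo^n$ to an instance $A(x)$ of $\Pi$ whose parameter $k$ satisfies $k\leq t^{1/d+o(1)}\cdot\poly(n)$. Next I would apply the hypothesized (possibly randomized) kernel $K$ of size $O(k^{d-\epsilon})$ to $A(x)$. Since a kernel bounds the total bit-length of its output by a polynomial in the parameter, the composed output $K(A(x))$ has bit-length
\[
  O\bigl((t^{1/d+o(1)}\cdot\poly(n))^{d-\epsilon}\bigr) \;=\; t^{\,1-\epsilon/d + o(1)}\cdot \poly(n),
\]
which is at most $t^{1-\epsilon'}\cdot\poly(n)$ for, say, $\epsilon'=\epsilon/(2d)$ and all sufficiently large~$t$. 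The clauses \ref{composition soundness} and \ref{composition completeness} of Definition~\ref{def:composition} are precisely the soundness and completeness clauses of Theorem~\ref{thm main simple} (with $L'=\Pi$), and since $A$ is deterministic the composed procedure $K\circ A$ inherits the error profile of $K$: a constant gap in the kernel's error probability means $e_s+e_c<1$ for $K\circ A$ as well.

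This exhibits $K\circ A$ as a relaxed OR-compression from $L$ into $\Pi$ satisfying every hypothesis of Theorem~\ref{thm main simple}, so the theorem yields $L\in\NPpoly\cap\coNPpoly$. If $L$ is $\NP$-hard, then $L\in\coNPpoly$ gives $\NP\subseteq\coNPpoly$, i.e.\ $\coNP\subseteq\NPpoly$; if $L$ is $\coNP$-hard, then $L\in\NPpoly$ already gives $\coNP\subseteq\NPpoly$. Either way (iii) is contradicted.

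The main (and essentially only) technical point I expect to have to verify is that the arithmetic of the size bound genuinely absorbs the factor $d-\epsilon$ into the exponent of $t$, leaving a power strictly less than $1$; the $o(1)$ slack in the parameter bound from Definition~\ref{def:composition} is what allows the choice $\epsilon'=\epsilon/(2d)$ above. Everything else is bookkeeping: aligning the two sets of soundness/completeness clauses, observing that randomness enters only through $K$ so the constant gap transfers without loss, and using the fact that the consequence $\coNP\subseteq\NPpoly$ is closed under complementation to unify the $\NP$-hard and $\coNP$-hard cases.
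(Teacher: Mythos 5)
Your proof is correct and follows essentially the same route as the paper's: compose the deterministic $d$-partite composition with the hypothesized kernel, check that the size exponent on $t$ drops strictly below $1$ (yielding some valid choice of the $\epsilon$ parameter, e.g.\ $\epsilon/(2d)$), and invoke Theorem~\ref{thm main simple}. You merely spell out a bit more of the bookkeeping — the exponent arithmetic $\bigl(t^{1/d+o(1)}\bigr)^{d-\epsilon}=t^{1-\epsilon/d+o(1)}$ and the explicit case split on $L$ being $\NP$- versus $\coNP$-hard — which the paper compresses into a single sentence.
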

\begin{proof}
  Let $L$ be an $\NP$-hard or $\coNP$-hard problem that has a 
  $d$-partite composition~$A'$ into $\Pi$.
  Assume for the sake of contradiction that~$\Pi$ has a kernelization algorithm 
  with soundness error at most~$e_s$ and completeness error at most $e_c$ so 
  that $e_s+e_c$ is bounded by a constant smaller than one.
  The concatenation of~$A'$ with the assumed 
  $O(k^{d-\epsilon'})$-kernelization gives rise to an algorithm $A$ 
  that satisfies the conditions of Theorem~\ref{thm main simple}, for 
  example with $\epsilon=\epsilon'/(2d)$.
  Therefore, we get $L\in (\coNPpoly\cap\NPpoly)$ and thus 
  $\coNP\subseteq\NPpoly$, a contradiction.
\end{proof}
Several variants of the framework provided by this corollary are 
possible:
\begin{enumerate}
  \item
    In order to rule out $\poly(k)$-kernels for a parameterized 
    problem~$\Pi$, we just need to prove that $\Pi$ is 
    $d$-compositional for all $d\in\N$; let's call $\Pi$ 
    \emph{compositional} in this case.
    One way to show that $\Pi$ is compositional is to construct a 
    single \emph{composition} from a hard problem~$L$ into $\Pi$; this 
    is an algorithm as in Definition~\ref{def:composition}, except 
    that we replace \ref{composition size} with the bound $k\leq 
    t^{o(1)} \poly(n)$.
  \item
    Since all $x_i$'s in Definition~\ref{def:composition} are promised 
    to have the same length, we can consider a padded version $\tilde 
    L$ of the language~$L$ in order to filter the input instances of 
    length~$n$ of the original~$L$ into a polynomial number of 
    equivalence classes.
    Each input length of $\tilde L$ in some interval $[p_1(n),p_2(n)]$ 
    corresponds to one equivalence class of length-$n$ instances 
    of~$L$.
    So long as $\tilde L$ remains $\NP$-hard or $\coNP$-hard, it is 
    sufficient to consider a composition from $\tilde L$ into $\Pi$.
    \textcite[Definition~4]{crosscomposition} formalize this approach.
  \item
    The composition algorithm can also use randomness, as long as the 
    overall probability gap of the concatenation of composition and 
    kernelization is not negligible.
  \item
    In the case that $L$ is $\NP$-hard, \textcite{FortnowSanthanam} and 
    \textcite{DellVanMelkebeek} prove that the composition algorithm can also be 
    a $\coNP$-algorithm or even a $\coNP$ oracle communication game in order to 
    get the collapse.
    Interestingly, this does not seem to follow from Drucker's proof nor from 
    the proof presented here, and it seems to require the full completeness 
    condition for the OR-composition.
    \textcite{kratsch2012co,pointlinecover} exploit these variants of the 
    composition framework to prove kernel lower bounds.
\end{enumerate}

\section{Preliminaries}
\def\RemainingSet{R}%

For any set $\RemainingSet\subseteq \zo^*$ and any $\ell\in\N$, we 
write $\RemainingSet_\ell \doteq \RemainingSet \cap \zo^\ell$ for the 
set of all length-$\ell$ strings inside of $\RemainingSet$.
For any $t\in\N$, we write $[t] \doteq \set{1,\dots,t}$.
For a set $V$, we write $\binom{V}{\leq t}$ for the set of all subsets 
$x\subseteq V$ that have size at most $t$.
We will work over a finite alphabet, usually $\Sigma=\zo$.
For a vector $a\in\Sigma^t$, a number $j\in[t]$, and a value 
$y\in\Sigma$, we write $a\vert_{j\leftarrow y}$ for the string that 
coincides with~$a$ except in position~$j$, where it has value~$y$.
For background in complexity theory, we defer to the book by 
\textcite{AB-book}.
We assume some familiarity with the complexity classes $\NP$ and 
$\coNP$ as well as their non-uniform versions $\NPpoly$ and 
$\coNPpoly$.

\subsection{Distributions and Randomized Mappings}

A \emph{distribution} on a finite ground set $\Omega$ is a function 
$\distD\colon \Omega\to[0,1]$ with
${\sum_{\omega\in\Omega}\distD(\omega)=1}$.
The \emph{support} of $\distD$ is the set 
$\supp\distD=\setc*{\omega\in\Omega}{\distD(\omega)>0}$.
The \emph{uniform distribution} $\Un_{\Omega}$ on $\Omega$ is the 
distribution with $\Un_{\Omega}(\omega)=\frac{1}{|\Omega|}$ for all 
$\omega\in\Omega$.
We often view distributions as \emph{random variables}, that is, we 
may write $f(\distD)$ to denote the distribution $\distD'$ that first 
produces a sample $\omega\sim\distD$ and then outputs $f(\omega)$, where 
$f\colon\Omega\to\Omega'$.
We use any of the following notations:
\begin{align*}
  \distD'(\omega')
  =\Pr ( f(\distD)=\omega' )
  =\Pr_{\omega\sim\distD} ( f(\omega)=\omega' )
  =\sum_{\omega\in\Omega} \distD(\omega) \cdot 
  \Pr(f(\omega)=\omega')\,.
\end{align*}
The last term $\Pr(f(\omega)=\omega')$ in this equation is either $0$ 
or $1$ if $f$ is a deterministic function, but we will also allow $f$ 
to be a \emph{randomized mapping}, that is, $f$ has access to some 
``internal'' randomness.
This is modeled as a function $f\colon\Omega\times\zo^r\to\Omega'$ for some 
$r\in\N$, and we write $f(\distD)$ as a short-hand for $f(\distD,\Un_{\zo^r})$.
That is, the internal randomness consists of a sequence of independent 
and fair coin flips.

\subsection{Statistical Distance}
The \emph{statistical distance} $d(X,Y)$ between two distributions $X$ 
and $Y$ on $\Omega$ is defined as
\begin{align}\label{def:statdist}
  d(X,Y)
  &=
  \max_{T\subseteq\Omega}
  \big|\Pr(X \in T) - \Pr(Y \in T)\big|\,.
\end{align}
The statistical distance between $X$ and $Y$ is a number in $[0,1]$, 
with $d(X,Y)=0$ if and only if $X=Y$ and $d(X,Y)=1$ if and only if the 
support of $X$ is disjoint from the support of $Y$.
It is an exercise to show the standard equivalence between the 
statistical distance and the \mbox{$1$-norm}:
\begin{align*}
  d(X,Y)
  &=
  \frac{1}{2}\cdot
  \big\|X - Y \big\|_1
  = \frac{1}{2}\sum_{\omega\in\Omega} \big|\Pr(X=\omega) - 
  \Pr(Y=\omega)\big|\,.
\end{align*}

\subsection{The Statistical Distance Problem}
For $\Un=\Un_{\zo^n}$ and $0\leq \delta < \Delta \leq 1$, let $\SD$ be 
the following promise problem:
\begin{description}
  \item[\normalfont\itshape yes-instances:]
    Pairs of circuits $C,C' \colon \zo^n \to \zo^*$ so that $d(C(\Un),C'(\Un)) 
    \geq \Delta$.
  \item[\normalfont\itshape no-instances:]
    Pairs of circuits $C,C' \colon \zo^n \to \zo^*$ so that $d(C(\Un),C'(\Un)) 
    \leq \delta$.
\end{description}

The statistical distance problem is not known to be polynomial-time 
computable, and in fact it is not believed to be.
On the other hand, the problem is also not believed to be \NP-hard 
because the problem is computationally easy in the following sense.
\begin{theorem}[\textcite{XiaoThesis} + \textcite{Adleman}]%
  \label{theorem:SD easy}%
  \mbox{}\\
  If $\delta<\Delta$ are constants, we have
  $\SD \in \Big(\NPpoly \;\cap\; \coNPpoly\Big)\,.$\\
  Moreover, the same holds when $\delta=\delta(n)$ and 
  $\Delta=\Delta(n)$ are functions of the input length that satisfy 
  $\Delta-\delta\geq\frac{1}{\poly(n)}$.
\end{theorem}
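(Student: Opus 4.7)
The plan is a three-step reduction: amplify the promise gap, place the amplified problem in $\AM\cap\coAM$, and then de-randomize using Adleman's trick to produce polynomial advice.

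First, I would apply the polarization lemma of Sahai and Vadhan. Given circuits $(C,C')$ with the promise $\Delta-\delta\geq 1/\poly(n)$, this lemma constructs in polynomial time a new pair $(D,D')$ of circuits of size $\poly(n)$ such that $d(D(\Un),D'(\Un))\geq 1-2^{-n}$ in the yes-case and $d(D(\Un),D'(\Un))\leq 2^{-n}$ in the no-case. The construction alternates direct-product repetitions (which drive small distances toward $0$ quadratically) with XOR-style combinations (which drive large distances toward~$1$); the inverse-polynomial gap hypothesis ensures that a polynomial number of iterations suffice.

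Second, I would show that the polarized problem lies in $\AM\cap\coAM$. For the $\coAM$ direction (separating the close case from the far case), Arthur samples a uniform~$r$, reveals $z=D(r)$ to Merlin, and expects Merlin to present an $r'$ with $D'(r')=z$ together with certificates about the relative sizes of the fibers $D^{-1}(z)$ and $D'^{-1}(z)$: in the close case Merlin can answer honestly, while in the far case the forward fiber is typically empty. For the $\AM$ direction, I would apply the Goldwasser--Sipser set lower-bound protocol to the disjoint-union distribution on pairs $(b,D_b(\Un))$ with $b\in\zo$: in the far case the support has nearly $2\cdot|\supp D(\Un)|$ elements, whereas in the close case it has essentially $|\supp D(\Un)|$ elements, and a hash-based approximate-counting protocol lets Arthur distinguish these two regimes.

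Third, I would invoke Adleman's argument to collapse the randomness into advice. After standard parallel-repetition amplification, the error of each Arthur-Merlin protocol can be driven below $2^{-2n}$. A union bound over the $2^n$ length-$n$ inputs then shows that there exists a single sequence of random coins that works correctly on every input of this length; this sequence becomes the polynomial advice, and Arthur's residual verification is a deterministic predicate on the existence of a Merlin proof, yielding an $\NPpoly$ algorithm. By symmetry, the same argument applied to the $\coAM$ protocol yields a $\coNPpoly$ algorithm. The main technical obstacle is the polarization step, since statistical distance is resistant to naive direct-product amplification precisely when $\Delta^2\leq\delta$, which is the regime of inverse-polynomial gap; I would treat the Sahai--Vadhan construction as a black box rather than re-derive it. The remaining ingredients---the $\AM\cap\coAM$ protocols for polarized $\SD$ and the Adleman-style union bound---are classical.
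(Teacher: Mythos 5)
There is a genuine gap, and it sits exactly at the step you flag as the ``main technical obstacle'': the polarization step is not available in the parameter regime the theorem actually covers. The Sahai--Vadhan polarization lemma requires the thresholds to satisfy $\Delta^2>\delta$; its alternation of XOR and direct-product operations drives the two cases apart only when $(\Delta^2/\delta)$ is bounded away from one, and it is not known (and for the natural black-box constructions is known not) to work when $\Delta^2\leq\delta$. The statement you are asked to prove allows arbitrary constants $\delta<\Delta$ (say $\Delta=0.6$, $\delta=0.5$, where $\Delta^2=0.36<\delta$) and, a fortiori, an inverse-polynomial additive gap, which is almost entirely inside the regime $\Delta^2\leq\delta$. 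Treating the Sahai--Vadhan construction as a black box does not help, because the black box simply does not accept these inputs. This is not a minor technicality: the whole reason this paper invokes the present theorem instead of the statistical zero-knowledge route is that the SZK/polarization route only applies when $\Delta^2>\delta$, which in Drucker's original argument translates into the more restrictive hypothesis on $e_s+e_c$ that the paper explicitly wants to avoid. Your proposal, even if every later step were carried out, would therefore prove only the weaker $\Delta^2>\delta$ version, not the stated theorem.

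The intended proof does not polarize at all: it cites Xiao's result that $\SD\in\AM\cap\coAM$ already when $\Delta-\delta\geq 1/\poly(n)$ (Xiao gives Arthur--Merlin protocols that handle an additive inverse-polynomial gap directly, without any gap amplification on the statistical distance itself), and then applies exactly the derandomization you describe in your third step, $\AM\subseteq\NPpoly$ and $\coAM\subseteq\coNPpoly$ via Adleman's union-bound argument. So your steps two and three are in the right spirit, but step two also has a flaw worth noting even in the large-gap setting: the $\AM$ protocol you sketch counts the support of the disjoint-union distribution on pairs $(b,D_b(\Un))$, and support size is not controlled by statistical distance --- two distributions can be at distance $1-2^{-n}$ while having identical supports, so the Goldwasser--Sipser set-size comparison does not distinguish the two cases. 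The standard protocol for certifying farness is instead the ``Merlin guesses the bit $b$'' private-coin protocol (success probability near $1$ versus near $1/2$), converted to public coin; and the closeness direction genuinely needs the heavier fiber-size upper-bound machinery from the $\cc{SZK}\subseteq\AM\cap\coAM$ proofs, which your sketch elides. To repair the proposal you would have to replace the polarization step by a protocol that, like Xiao's, tolerates an additive $1/\poly(n)$ gap directly.
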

This is the only fact about the SD-problem that we will use in this 
paper.

Slightly stronger versions of this theorem are known:
For example, \textcite[p.~144ff]{XiaoThesis} proves that $\SD\in 
\AM\cap\coAM$ holds.
In fact, Theorem~\ref{theorem:SD easy} is established by combining his 
theorem with the standard fact that $\AM \subseteq \NPpoly$, i.e., 
that Arthur--Merlin games can be derandomized with polynomial advice 
\parencite{Adleman}.
Moreover, when we have the stronger guarantee that $\Delta^2>\delta$ 
holds, then $\SD$ can be solved using \emph{statistical zero-knowledge 
  proof systems} \parencite{SV03,GV11}.
Finally, if $\Delta=1$, the problem can be solved with \emph{perfect} 
zero-knowledge proof systems \parencite[Proposition~5.7]{SV03}.
Using these stronger results whenever possible gives slightly stronger 
complexity collapses in the main theorem.

\section{Ruling out OR-compressions}
\label{sec:proof}

In this section we prove Theorem~\ref{thm main simple}: Any language 
$L$ that has a relaxed OR-compression is in $\coNPpoly \cap \NPpoly$.
We rephrase the theorem in a form that reveals the precise inequality 
between the error probabilities and the compression ratio needed to 
get the complexity consequence.
\begin{theorem}[$\epsilon t$-compressive version of Drucker's theorem]
  \label{thm main}
  Let $L,L'\subseteq\zo^*$ be languages and $e_s,e_c\in[0,1]$ be some 
  constants denoting the error probabilities.
  Let $t=t(n)>0$ be a polynomial and $\epsilon>0$.
  Let
  \begin{align}\label{eq:A type}
    A \colon \binom{\zo^n}{\leq t}\to\zo^{\epsilon t}
  \end{align}
  be a randomized
  $\Ppoly$-algorithm such that,
  for all $x \in \binom{\zo^n}{\leq t}$,
  \begin{itemize}
    \item if $|x\cap L|=0$, then $A(x)\in \overline{L'}$ holds with 
      probability $\geq 1-e_s$, and
    \item if $|x\cap L|=1$, then $A(x)\in L'$ holds with probability 
      $\geq 1-e_c$.
  \end{itemize}

  If $e_s+e_c < 1 - \sqrt{(2\ln 2) \epsilon}$,
  then $L\in\NPpoly \cap \coNPpoly$.
\end{theorem}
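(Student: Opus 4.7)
The plan is to build a $\Ppoly$-reduction from $L$ to the statistical distance problem $\SD$ and then invoke Theorem~\ref{theorem:SD easy} to conclude $L \in \NPpoly \cap \coNPpoly$. The advice of the reduction will be a small ``dominating set'' of no-instances, and the $\SD$ oracle will check whether the input instance is dominated. Two ingredients are needed: (a) an information-theoretic lemma saying that compressive maps have small average noise sensitivity when a single input coordinate is resampled; and (b) a generalization of the classical fact that tournaments have logarithmic-size dominating sets to \emph{hypergraph tournaments}, i.e.\ complete $t$-uniform hypergraphs equipped with a selection function picking one vertex from each hyperedge.

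First, I would prove the sensitivity lemma: for any $R \subseteq \zo^n$, the distribution $X$ on $R^t$ obtained by sampling $t$ coordinates i.i.d.\ uniformly from $R$, and the randomized map $A$, one has
\begin{align*}
  \Exp_{j \in [t]}\Exp_{x \sim X,\, y \sim R}\,
  d\paren*{A(x),\, A(x\vert_{j \leftarrow y})}
  \;\leq\;
  \sqrt{(2\ln 2)\,\epsilon}.
\end{align*}
This follows from the chain rule for mutual information together with Pinsker's inequality applied to the joint distribution of $(x, A(x))$: the numerical input is that $A(x)$ has at most $\epsilon t$ bits of entropy, hence carries on average $\epsilon$ bits about each coordinate of $x$, and Pinsker converts bits of information into statistical distance via a square root.

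Next, I would set up the hypergraph tournament. Fix an input length $n$, let $R = \overline{L}_n$, and for each $e \in \binom{R}{t}$ define $T(e) \in e$ to be a coordinate exhibiting a witnessed gap: swapping $T(e)$ for a yes-instance moves $A(e)$'s output distribution from $\overline{L'}$ to $L'$ (by soundness and completeness), whereas swapping $T(e)$ for a random element of $R$ changes $A(e)$ by little (by the sensitivity lemma applied with this $R$). I would then generalize the greedy argument for tournaments, iteratively adding to $D$ a vertex that is the selected element in as many still-undominated hyperedges as possible, showing that every hypergraph tournament on a ground set of size $|V|$ admits a dominating set of size $O(\log |V|)$; since $|R| \leq 2^n$, the resulting dominating set $D$ fits into $\poly(n)$ bits of advice. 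On input $z \in \zo^n$, the $\Ppoly$-reduction then constructs two circuits $C,C'$ that sample a random $(t-1)$-subset $S \subseteq D$ and output $A(S \cup \{z\})$ and $A(S \cup \{w\})$ for a random $w \in D$; the dominating-set property combined with the sensitivity lemma gives $d(C,C') \leq \delta$ when $z \notin L$, while soundness/completeness give $d(C,C') \geq \Delta$ when $z \in L$, with $\Delta - \delta$ a positive constant under the hypothesis $e_s + e_c < 1 - \sqrt{(2\ln 2)\,\epsilon}$. A single oracle call to $\SD$ then decides $L$.

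The main obstacle is calibrating $T$ so that the sensitivity lemma, which is purely an averaging statement over $e$ and $j$, can be pushed through the dominating-set construction to yield a per-instance guarantee for the specific family of hyperedges $S \cup \{z\}$ with $S \subseteq D$ that arise in the reduction. This is precisely the role of the hypergraph tournament abstraction: it packages the average sensitivity bound into a combinatorial structure whose small dominating set certifies no-instances point-by-point. Getting the constants to match the stated threshold $\sqrt{(2\ln 2)\,\epsilon}$ — incurring the announced factor-of-two loss relative to Drucker's sharper bound — is a matter of careful bookkeeping in the Pinsker step and in the definition of ``dominated.''
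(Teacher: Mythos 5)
Your overall scaffolding matches the paper's proof (a $\Ppoly$-reduction to $\SD$ whose advice is a dominating set of a hypergraph tournament on $\overline{L}_n$, plus a Pinsker-based noise-sensitivity lemma), but two of your concrete steps fail. First, the dominating-set claim is wrong as stated: for a hypergraph tournament $S\colon\binom{V}{t}\to V$ the right object is a family $\domset\subseteq\binom{V}{t-1}$ of $(t-1)$-element sets, where $g$ dominates $v$ iff $v\in g$ or $S(g\cup\{v\})=v$, and the greedy argument yields $|\domset|\leq t\log|V|$ (each step an averaging argument produces one $g$ that dominates a $1/t$-fraction of the undominated vertices). Your $D$ is a set of \emph{vertices} of size $O(\log|V|)$; since $t=t(n)$ must be taken to be a large polynomial in $n$ while $\log|V|\leq n$, such a $D$ need not even contain a single $(t-1)$-subset, so the circuits ``sample a random $(t-1)$-subset $S\subseteq D$'' are ill-defined, and the greedy step ``add a vertex that is the selected element in many hyperedges'' does not dominate anything (you must add sets $g$ that \emph{select} many other vertices, not vertices that are themselves selected). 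Second, and independently, the final single-query reduction is unsound for no-instances. Domination only guarantees that \emph{one specific} $g$ in the advice satisfies the $\leq\delta$ bound for the given $z\notin L$; your circuits average over all $(t-1)$-subsets $S\subseteq D$ and random $w\in D$, and nothing bounds that averaged distance: soundness only forces $A$'s output into $\overline{L'}$ on all-no inputs, so $A$ may legally produce nearly disjoint output distributions on $S\cup\{z\}$ versus $S\cup\{w\}$ (e.g.\ by encoding its input), making $d(C,C')$ close to $1$ and your reduction accept a no-instance. The paper avoids this by issuing one $\SD$ query per $g\in\domset$ and accepting only if \emph{all} queries are claimed far; then the single dominated query forces rejection, and queries violating the promise are harmless because their answers are never needed.

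A related calibration point you flag but do not resolve is essential rather than bookkeeping: the selector must be defined for \emph{every} hyperedge $e$, so the sensitivity lemma has to be applied with randomness internal to $e$ (in the paper, $A(\Un_{2^e}\setminus\{v\})$ versus $A(\Un_{2^e}\cup\{v\})$, i.e.\ a uniformly random subset of $e$), which simultaneously makes the query distributions sampleable by polynomial-size circuits built from the advice. Your formulation with $t$ i.i.d.\ samples from $R=\overline{L}_n$ gives only an average-over-$e$ statement (so $T(e)$ is not defined pointwise), and distributions over all of $\overline{L}_n$ cannot be sampled by small circuits anyway; also the ``witnessed gap'' property of $T(e)$ you describe (swapping in a yes-instance flips the answer) holds for every element of $e$, so it cannot serve as the selection criterion --- the selection must be made by the $\leq\delta$ condition alone, and Lemma-style existence of such an element per edge is exactly what needs proving.
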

This is Theorem~7.1 in \textcite{Druckerfull}.
However, there are two noteworthy differences:
\begin{enumerate}
\item
  Drucker obtains complexity consequences even when
  ${e_s+e_c}< 1 - \sqrt{(\ln 2/2) \epsilon}$ holds, which makes his 
  theorem more general.
  The difference stems from the fact that we optimized the proof in 
  this section for simplicity and not for the optimality of the bound.
  He also obtains complexity consequences under the (incomparable) 
  bound
  ${e_s+e_c}< 2^{-\epsilon-3}$.
  Using the slightly more complicated setup of~\S\ref{sec:proof2}, we 
  would be able to achieve both of these bounds.
\item
  To get a meaningful result for OR-compression of \NP-complete 
  problems, we need the complexity consequence
  $L\in\NPpoly \cap \coNPpoly$ rather than just $L\in\NPpoly$.
  To get the stronger consequence, Drucker relies on the fact that the 
  statistical distance problem $\SD$ has statistical zero knowledge 
  proofs.
  This is only known to be true when $\Delta^2>\delta$ holds, which translates 
  to the more restrictive assumption
  $(e_s+e_c)^2 < 1-\sqrt{(\ln 2/2)\epsilon}$ in his theorem.
  We instead use Theorem~\ref{theorem:SD easy}, which does not go 
  through statistical zero knowledge and proves more directly that $\SD$ is in 
  $\NPpoly\cap\coNPpoly$ whenever $\Delta>\delta$ holds.
  Doing so in Drucker's paper immediately improves all of his 
  $L\in\NPpoly$ consequences to $L\in\NPpoly\cap\coNPpoly$.
\end{enumerate}
To obtain Theorem~\ref{thm main simple}, the basic version of Drucker's theorem, 
as a corollary of Theorem~\ref{thm main}, none of these differences matter.
This is because we could choose $\epsilon>0$ to be sufficiently smaller in the 
proof of Theorem~\ref{thm main simple}, which we provide now before we turn to 
the proof of Theorem~\ref{thm main}.
\begin{proof}[of Theorem~\ref{thm main simple}]
  Let $A$ be the algorithm assumed in Theorem~\ref{thm main simple}, 
  and let $C\geq 2$ be large enough so that the output size of $A$ is 
  bounded by $t^{1-1/C}\cdot C \cdot n^C$.
  We transform~$A$ into an algorithm as required for Theorem~\ref{thm 
  main}.
  Let $\epsilon>0$ be a small enough constant so that $e_s+e_c < 1-
  \sqrt{(2\ln 2) \epsilon}$.
  Moreover, let $t(n)$ be a large enough polynomial so that 
  $(t(n))^{1-1/C}\cdot C\cdot n^C < \epsilon t(n)$ holds.
  Then we restrict $A$ to a family of functions
  $A_n\colon\binom{\zo^n}{\leq t(n)}\to\zo^{<\epsilon t(n)}$.
  Now a minor observation is needed to get an algorithm of the form~\eqref{eq:A 
    type}:
  The set $\zo^{<\epsilon t}$ can be efficiently encoded in 
  $\zo^{\epsilon t}$ (which changes the output language from $L'$ to 
  some $L''$).
  Thus we constructed a family~$A_n$ as required by Theorem~\ref{thm 
    main}, which proves the claim.
\end{proof}

\subsection{ORs are sensitive to Yes-instances}

The semantic property of relaxed OR-compressions is that they are 
``$L$-sensitive'': They show a dramatically different behavior for 
all-no input sets vs.\ input sets that contain a single yes-instance 
of~$L$.
The following simple fact is the only place in the overall proof where 
we use the soundness and completeness properties of~$A$.
\begin{lemma}\label{lemma:yes}
  For all distributions $X$ on $\binom{\overline{L}}{<t}$ and all 
  $v\in L$, we have
  \begin{align}\label{eq:yes}
    d\Big( A(X) \;,\; A(X \cup \{v\} ) \Big) \geq \Delta \doteq 
    1-(e_s+e_c)\,.
  \end{align}
\end{lemma}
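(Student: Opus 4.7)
The plan is to apply the definition of statistical distance in equation~\eqref{def:statdist} with the test set $T=L'$ and then read off the two bounds from the soundness and completeness guarantees of~$A$, averaged over the distribution~$X$.

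More concretely, I would first fix any realization $x\in\binom{\overline L}{<t}$ in the support of~$X$. Since $x\subseteq\overline L$, the input set~$x$ contains only no-instances, so soundness gives $\Pr\bigl(A(x)\in L'\bigr)\leq e_s$, where the probability is over the internal randomness of~$A$. On the other hand, $x\cup\{v\}$ has size at most~$t$ and contains exactly one yes-instance of~$L$, namely~$v$, so completeness gives $\Pr\bigl(A(x\cup\{v\})\in L'\bigr)\geq 1-e_c$. Averaging each of these inequalities over $x\sim X$ yields
\begin{align*}
  \Pr\bigl(A(X)\in L'\bigr) &\leq e_s, &
  \Pr\bigl(A(X\cup\{v\})\in L'\bigr) &\geq 1-e_c.
\end{align*}

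Finally, I would plug $T=L'$ (intersected with $\zo^{\epsilon t}$, to make it a subset of the output space) into the definition
\[
  d\bigl(A(X),\,A(X\cup\{v\})\bigr)=\max_{T}\bigl|\Pr(A(X)\in T)-\Pr(A(X\cup\{v\})\in T)\bigr|
\]
and subtract the two displayed bounds to obtain $(1-e_c)-e_s=\Delta$, which is exactly the claimed lower bound.

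There is no real obstacle here; the only point that needs a moment of care is the bookkeeping of the two independent randomness sources, namely the sample from~$X$ and the internal coin flips of~$A$. This is handled cleanly by viewing $A(X)$ and $A(X\cup\{v\})$ as the distributions defined in the preliminaries, so that ``averaging'' is literally the definition of the induced output distribution. Note that the argument uses soundness and completeness in a completely black-box way; this is why the authors emphasize that Lemma~\ref{lemma:yes} is the only place where these two properties of~$A$ enter the overall proof.
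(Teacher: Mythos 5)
Your proof is correct and takes essentially the same route as the paper: apply soundness and completeness pointwise, average over $X$, and plug $T=L'$ into the definition of statistical distance. The paper's version is just more terse; your extra care about intersecting $L'$ with the output alphabet and about separating the two randomness sources is harmless and does not change the argument.
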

\begin{proof}
  The probability that $A(X)$ outputs an element of $L'$ is at most 
  $e_s$, and similarly, the probability that $A(X\cup\{v\})$ outputs 
  an element of $L'$ is at least $1-e_c$.
  By \eqref{def:statdist} with $T=L'$, the statistical distance 
  between the two distributions is at least $\Delta$.
\end{proof}
Despite the fact that relaxed OR-compressions are sensitive to the 
presence or absence of a yes-instance, we argue next that their 
behavior \emph{within} the set of no-instances is actually quite 
predictable.

\subsection{The average noise sensitivity of compressive maps is 
  small}

Relaxed OR-compressions are in particular compressive maps.
The following lemma says that the average noise sensitivity of any 
compressive map is low.
Here, ``average noise sensitivity'' refers to the difference in the 
behavior of a function when the input is subject to random noise; in 
our case, we change the input in a single random location and notice 
that the behavior of a compressive map does not change much.
\begin{lemma}\label{lemma:pinsker}
  Let $t\in\N$, let $X$ be the uniform distribution on $\zo^t$, and 
  let $\epsilon>0$.
  Then, for all randomized mappings $f\colon\zo^t\to\zo^{\epsilon t}$, we have
  \begin{align}\label{eq:noise sensitivity}
    \Exp_{j\sim\Un_{[t]}}
    \quad
    d
    \Big(
      f
      \big(
      X
      \vert_{j\leftarrow 0}
      \big)
      \;,\;
      f
      \big(
      X
      \vert_{j\leftarrow 1}
      \big)
    \Big)
    \quad
    &\leq
    \delta \doteq
    \sqrt{2 \ln 2 \cdot \epsilon}
    \,.
  \end{align}
\end{lemma}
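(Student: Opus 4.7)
The plan is to combine three standard information-theoretic ingredients: the chain rule for mutual information, the subadditivity of entropy, and Pinsker's inequality. Set $M \doteq f(X)$. Since $M$ takes values in $\zo^{\epsilon t}$, its entropy in bits satisfies $H(M) \leq \epsilon t$, regardless of any internal randomness used by $f$.

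First I would show that $\Exp_{j\sim\Un_{[t]}} I(X_j;M) \leq \epsilon$. The bits $X_1,\dots,X_t$ are independent and uniform, so $\sum_j H(X_j)=t$ and the chain rule gives $I(X;M) = t - H(X \mid M)$. Subadditivity of conditional entropy gives $H(X \mid M) \leq \sum_j H(X_j \mid M)$, hence
\begin{align*}
  \sum_{j=1}^t I(X_j; M)
  \;=\; t - \sum_j H(X_j \mid M)
  \;\leq\; I(X;M)
  \;\leq\; H(M)
  \;\leq\; \epsilon t\,,
\end{align*}
and dividing by $t$ yields the claim.

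Next, for a fixed $j\in[t]$, set $M_b \doteq f(X\vert_{j\leftarrow b})$ for $b\in\zo$. Since $X_j$ is a uniform bit independent of the other coordinates, the marginal of $M$ equals $\bar M \doteq \tfrac{1}{2}(M_0+M_1)$, and the mutual information rewrites as the Jensen--Shannon divergence
\begin{align*}
  I(X_j; M) = \tfrac{1}{2}\KLdiv(M_0 \,\|\, \bar M) + \tfrac{1}{2}\KLdiv(M_1 \,\|\, \bar M)\,.
\end{align*}
Applying Pinsker's inequality in its bits-based form $\sdist(P,Q)^2 \leq \tfrac{\ln 2}{2}\KLdiv(P\,\|\,Q)$ to each term, together with the identity $\sdist(M_b, \bar M) = \tfrac{1}{2}\sdist(M_0, M_1)$, yields the pointwise bound $\sdist(M_0, M_1)^2 \leq 2\ln 2 \cdot I(X_j; M)$.

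Combining the two bounds via Jensen's inequality (concavity of $\sqrt{\cdot}$) closes the argument:
\begin{align*}
  \Exp_{j\sim\Un_{[t]}} \sdist(M_0, M_1)
  \;\leq\; \sqrt{\Exp_j \sdist(M_0, M_1)^2}
  \;\leq\; \sqrt{2\ln 2 \cdot \Exp_j I(X_j; M)}
  \;\leq\; \sqrt{2\ln 2 \cdot \epsilon}\,.
\end{align*}
The only real subtlety is tracking the factor $\ln 2$ in Pinsker's inequality, which arises when converting between entropies measured in bits (appropriate for the compression budget $\epsilon t$) and KL-divergences measured in nats. Everything else is a mechanical unwinding of definitions; in particular, internal randomness of $f$ causes no trouble because the bound $H(M)\leq \epsilon t$ only relies on the size of the codomain.
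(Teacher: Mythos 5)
Your proof is correct and takes essentially the same route as the paper's: both bound $\Exp_{j}I(X_j;f(X))\le\epsilon$ using the compression bound $H(f(X))\le\epsilon t$ (the paper via the chain rule for mutual information in its Lemma~\ref{lemma:KL bound}, you via subadditivity of conditional entropy), and both then apply Pinsker's inequality together with Jensen to convert this into the stated bound on the expected statistical distance. The only cosmetic difference is that you use the exact midpoint identity $d(M_b,\bar M)=\tfrac12 d(M_0,M_1)$ where the paper invokes the triangle inequality through $f(X)$; since $\bar M$ is the $\ell_1$-midpoint of $M_0$ and $M_1$, that triangle inequality is tight, so the two arguments yield the same constant.
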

We defer the purely information-theoretic and mechanical proof of this 
lemma to \S\ref{sec:info1}.
In the special case where $f\colon\zo^t\to\zo$ is a Boolean function, the 
left-hand side of~\eqref{eq:noise sensitivity} coincides with the usual 
definition of the average noise sensitivity.

We translate Lemma~\ref{lemma:pinsker} to our relaxed 
OR-compression~$A$ as follows.
\begin{lemma}\label{lemma:selector exists}
  Let $A\colon\binom{\zo^n}{\leq t}\to\zo^{\epsilon t}$.
  For all $e\in\binom{\zo^n}{t}$, there exists $v\in e$ so that
  \begin{align}\label{eq:selector}
    d
    \Big(
      A
      \big(
      \Un_{2^e}
      \setminus\{v\}
      \big)
      \;,\;
      A
      \big(
      \Un_{2^e}
      \cup\{v\}
      \big)
    \Big)
    &\leq
    \delta
    \,.
  \end{align}
\end{lemma}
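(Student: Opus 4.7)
The plan is to reduce Lemma~\ref{lemma:selector exists} directly to Lemma~\ref{lemma:pinsker} via the natural identification of subsets of $e$ with bit vectors in $\zo^t$. The key observation is that the distributions $\Un_{2^e}\setminus\{v\}$ and $\Un_{2^e}\cup\{v\}$ differ exactly in the single coordinate corresponding to $v$: each other element of $e$ is included independently with probability $1/2$, while $v$ is forced out in the first and forced in in the second. This is precisely the ``flip one coordinate'' pattern for which the noise-sensitivity bound of Lemma~\ref{lemma:pinsker} is designed.

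Concretely, I would fix an arbitrary enumeration $e=\{v_1,\dots,v_t\}$ and define a randomized mapping $f\colon\zo^t\to\zo^{\epsilon t}$ by $f(x) \doteq A\bigl(\{v_j : x_j=1\}\bigr)$. Since $|\{v_j : x_j=1\}|\leq t$, this is well-defined and has the type required by Lemma~\ref{lemma:pinsker}, and it inherits $A$'s internal randomness. Under the identification between subsets $S\subseteq e$ and their indicator vectors $x\in\zo^t$, the uniform distribution $\Un_{2^e}$ corresponds to $X\doteq\Un_{\zo^t}$, and for each $j\in[t]$ the distributions $A(\Un_{2^e}\setminus\{v_j\})$ and $A(\Un_{2^e}\cup\{v_j\})$ coincide (as distributions on $\zo^{\epsilon t}$) with $f(X\vert_{j\leftarrow 0})$ and $f(X\vert_{j\leftarrow 1})$, respectively.

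Applying Lemma~\ref{lemma:pinsker} to $f$ therefore yields
\begin{align*}
  \Exp_{j\sim\Un_{[t]}}
  d\Big(A\big(\Un_{2^e}\setminus\{v_j\}\big)\;,\;A\big(\Un_{2^e}\cup\{v_j\}\big)\Big)
  \;\leq\;
  \delta\,.
\end{align*}
Since the average of a finite collection of nonnegative numbers is at least the minimum, there exists some $j^\star\in[t]$ for which the corresponding term is at most $\delta$; taking $v\doteq v_{j^\star}$ yields the element of $e$ required by~\eqref{eq:selector}. There is no real obstacle here: the entire content of the lemma is the translation from the ``$\pm v$'' set-theoretic notation into the coordinate-flip notation of Lemma~\ref{lemma:pinsker}, followed by a one-line averaging argument. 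All of the genuine work is deferred to the information-theoretic proof of Lemma~\ref{lemma:pinsker} in~\S\ref{sec:info1}.
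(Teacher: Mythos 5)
Your proposal is correct and follows essentially the same route as the paper's own proof: both fix an enumeration of $e$, define $f(b)\doteq A(\{v_j : b_j=1\})$, observe that $f(X\vert_{j\leftarrow 0})$ and $f(X\vert_{j\leftarrow 1})$ are exactly $A(\Un_{2^e}\setminus\{v_j\})$ and $A(\Un_{2^e}\cup\{v_j\})$, and then apply Lemma~\ref{lemma:pinsker} together with the trivial averaging argument to extract a single good index.
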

Here $\Un_{2^e}$ samples a subset of~$e$ uniformly at random.
Note that we replaced the expectation over $j$ from \eqref{eq:noise 
  sensitivity} with the mere existence of an element~$v$ in 
\eqref{eq:selector} since this is all we need; the stronger property 
also holds.
\begin{proof}
  To prove the claim, let $v_1,\dots,v_t$ be the elements of $e$ in 
  lexicographic order.
  For $b\in\zo^t$, let $g(b)\subseteq e$ be such that $v_i\in g$ holds 
  if and only if $b_i=1$.
  We define the randomized mapping $f\colon\zo^t\to\zo^{\epsilon t}$ as follows:
  \begin{align*}
    f(b_1,\dots,b_t) \doteq A\Big(g(b)\Big)
    \,.
  \end{align*}
  Then $f(X|_{j\leftarrow 0})=A(\Un_{2^e}\setminus\{v_j\})$
  and $f(X|_{j\leftarrow 1})=A(\Un_{2^e}\cup\{v_j\})$.
  The claim follows from Lemma~\ref{lemma:pinsker} with $v\doteq v_j$ 
  for some $j$ that minimizes the statistical distance in 
  \eqref{eq:noise sensitivity}.
\end{proof}
This lemma suggest the following tournament idea.
We let $V=\overline{L}_n$ be the set of no-instances, and we let them 
compete in matches consisting of $t$ players each.
That is, a match corresponds to a hyperedge $e\in\binom{V}{t}$ of 
size~$t$ and every such hyperedge is present, so we are looking at a 
complete $t$-uniform hypergraph.
We say that a player $v\in e$ is ``selected'' in the hyperedge $e$ if 
the behavior of $A$ on $\Un_{2^e}\setminus\{v\}$ is not very different 
from the behavior of $A$ on $\Un_{2^e}\cup\{v\}$, that is, if 
\eqref{eq:selector} holds.
The point of this construction is that $v$ being selected proves that 
$v$ must be a no-instance because \eqref{eq:yes} does not hold.
We obtain a ``selector'' function $S\colon\binom{V}{t}\to V$ that, given $e$, 
selects an element $v=S(e)\in e$.
We call~$S$ a \emph{hypergraph tournament} on~$V$.

\subsection{Hypergraph tournaments have small dominating sets}

Tournaments are complete directed graphs, and it is well-known that 
they have dominating sets of logarithmic size.
A straightforward generalization applies to hypergraph tournaments 
$S\colon\binom{V}{t}\to V$.
We say that a set $g\in\binom{V}{t-1}$ \emph{dominates} a vertex $v$ 
if $v\in g$ or $S(g\cup\{v\})=v$ holds.
A set $\domset\subseteq \binom{V}{t-1}$ is a \emph{dominating set} of 
$S$ if all vertices $v\in V$ are dominated by at least one element in 
$\domset$.
\begin{lemma}\label{lem:domset}
  Let $V$ be a finite set, and let $S\colon\binom{V}{t}\to V$ be a hypergraph 
  tournament.

  Then $S$ has a dominating set $\domset\subseteq\binom{V}{t-1}$ of 
  size at most $t \log|V|$.
\end{lemma}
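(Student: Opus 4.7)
The plan is to mimic the classical greedy argument that any ordinary tournament ($t=2$) on $n$ vertices has a dominating set of size $\log n$, generalized to the hypergraph setting. The key step, replacing "pick the vertex of largest out-degree" is "pick a $(t-1)$-set of largest out-degree", and the key counting estimate comes from a simple double-counting/pigeonhole between $\binom{R}{t}$ and $\binom{R}{t-1}$, where $R$ is the current set of undominated vertices.

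More precisely, I would proceed by greedy induction. Set $R_0 = V$, and having constructed $R_i$, if $|R_i| \geq t$, consider the map $\binom{R_i}{t} \to \binom{R_i}{t-1}$ sending $e$ to $e \setminus \{S(e)\}$. Since
\begin{align*}
  \binom{|R_i|}{t} = \binom{|R_i|}{t-1} \cdot \frac{|R_i| - t + 1}{t},
\end{align*}
pigeonhole yields some $g_i \in \binom{R_i}{t-1}$ that is the preimage of at least $(|R_i|-t+1)/t$ hyperedges. For each such preimage $e = g_i \cup \{v\}$ with $v \in R_i \setminus g_i$, we have $S(g_i \cup \{v\}) = v$, so $g_i$ dominates $v$; it also dominates its own $t-1$ vertices. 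Adding $g_i$ to $\domset$ and letting $R_{i+1}$ be the still-undominated vertices gives
\begin{align*}
  |R_{i+1}| \;\leq\; |R_i| - \tfrac{|R_i|-t+1}{t} - (t-1) \;=\; \tfrac{(t-1)(|R_i|-t+1)}{t} \;\leq\; |R_i|\bigl(1-\tfrac{1}{t}\bigr).
\end{align*}

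Iterating gives $|R_k| \leq |V|\,(1-1/t)^k \leq |V|\,e^{-k/t}$, so after $k = t\ln|V| \leq t\log|V|$ steps we have $|R_k| < 1$, i.e., $R_k = \emptyset$. (If the process terminates early because $|R_i| < t$, one final $g$ chosen as any $(t-1)$-subset of $V$ containing $R_i$ trivially dominates what remains, and this still fits under the $t\log|V|$ bound in the relevant regime; the edge cases $|V| \leq t-1$ are handled by a single such $g$.)

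The only mildly delicate point is the bookkeeping at the tail: matching the bound $t\log|V|$ when $|R_i|$ drops below $t$, and confirming that the base of the logarithm swallows the $\ln 2$ factor coming from $e^{-k/t}$. Neither is a real obstacle; the heart of the proof is the single-line pigeonhole on $\binom{R_i}{t} \to \binom{R_i}{t-1}$ that replaces the "sum of out-degrees" argument from the graph case.
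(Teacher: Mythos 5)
Your proposal is correct and follows essentially the same greedy argument as the paper's proof: iteratively shrink the set $R$ of undominated vertices by a factor of $(1-1/t)$ per step, then iterate $t\log|V|$ times. Your explicit pigeonhole on the map $e\mapsto e\setminus\{S(e)\}$ is exactly the paper's averaging identity $\tfrac{1}{t}=\Exp_{g\in\binom{R}{t-1}}\Pr_{v\in R\setminus g}(S(g\cup\{v\})=v)$ phrased as a counting argument, and your handling of the $|R_i|<t$ tail matches the paper's.
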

\begin{proof}
  We construct the set $\domset$ inductively.
  Initially, it has $k=0$ elements.
  After the $k$-th step of the construction, we will preserve the 
  invariant that $\domset$ is of size exactly~$k$ and that 
  $|\RemainingSet| \leq (1-1/t)^k \cdot |V|$ holds, where
  $R$ is the set of vertices that are not yet dominated, that is,
  \begin{align*}
    \RemainingSet &=
    \setc[\Big]{
      v \in V
    }{
      \text{%
        $v\not\in g$
        and
        $S(g\cup\{v\})\neq v$
        holds for all
        $g\in \domset$%
      }
    }\,.
  \end{align*}
  If $0<|R|<t$, we can add an arbitrary edge $g^*\in\binom{V}{t-1}$ 
  with $R\subseteq g^*$ to $\domset$ to finish the construction.
  Otherwise, the following averaging argument, shows that there is an 
  element $g^*\in\binom{\RemainingSet}{t-1}$ that dominates at least a 
  $1/t$-fraction of elements $v\in\RemainingSet$:
  \begin{align*}
    \frac{1}{t}
    =
    \Exp_{e\in\binom{\RemainingSet}{t}}
    \Pr_{v\in e} \Big(S(e) = v\Big)
    &=
    \Exp_{g\in\binom{\RemainingSet}{t-1}}
    \Pr_{v\in \RemainingSet-g} \Big(S(g\cup\{v\}) = v\Big)
    \,.
  \end{align*}
  Thus, the number of elements of $\RemainingSet$ left undominated by 
  $g^*$ is at most $(1-1/t) \cdot |\RemainingSet|$, so the inductive 
  invariant holds.
  Since $(1-1/t)^{k} \cdot |V| \leq \exp({-k/t}) \cdot |V| < 1$ for
  $k=t\log |V|$, we have $\RemainingSet=\emptyset$ after $k\leq t\log 
  |V|$ steps of the construction, and in particular, $\domset$ has at 
  most $t \log|V|$ elements.
\end{proof}

\subsection{Proof of the main theorem: Reduction to statistical 
  distance}
\begin{proof}[of Theorem~\ref{thm main}]
  We describe a deterministic $\Ppoly$ reduction from $L$ to the 
  statistical distance problem $\SD$ with $\Delta=1-(e_s+e_c)$ and 
  $\delta=\sqrt{( 2\ln 2 ) \epsilon}$.
  The reduction outputs the conjunction of polynomially many instances 
  of $\SD$.
  Since $\SD$ is contained in the intersection of $\NPpoly$ and 
  $\coNPpoly$ by Theorem~\ref{theorem:SD easy},
  and since this intersection is closed under taking polynomial 
  conjunctions, we obtain $L\in\NPpoly\cap\coNPpoly$.
  Thus it remains to find such a reduction.
  To simplify the discussion, we describe the reduction in terms of an 
  algorithm that solves~$L$ and uses $\SD$ as an oracle.
  However, the algorithm only makes non-adaptive queries at the end of 
  the computation and accepts if and only if all oracle queries 
  accept; this corresponds to a reduction that maps an instance of $L$ 
  to a conjunction of instances of $\SD$ as required.

  To construct the advice at input length~$n$, we use 
  Lemma~\ref{lemma:selector exists} with $t=t(n)$ to obtain a 
  hypergraph tournament~$S$ on~$V=\overline{L}_n$, which in turn gives 
  rise to a small dominating set $\domset\subseteq \binom{V}{t-1}$ by 
  Lemma~\ref{lem:domset}.
  We remark the triviality that if $|V|\leq t=\poly(n)$, then we can 
  use~$V$, the set of all no-instances of~$L$ at this input length, as 
  the advice.
  Otherwise, we define the hypergraph tournament~$S$ for all $e\in 
  \binom{V}{t}$ as follows:
  \begin{align*}
    S(e)
    &\doteq
    \min
    \setc*{
      v\in e
    }{
      d\paren[\Big]{
        A(\Un_{2^e} \setminus \{v\})
        \;,\;
        A(\Un_{2^e} \cup\{v\})
      }
      \leq \delta
    }\,.
  \end{align*}
  By Lemma~\ref{lemma:selector exists}, the set over which the minimum 
  is taken is non-empty, and thus~$S$ is well-defined.
  Furthermore, the hypergraph tournament has a dominating 
  set~$\domset$ of size at most $t n$ by Lemma~\ref{lem:domset}.
  As advice for input length~$n$, we choose this set~$\domset$.
  Now we have $v\in\overline{L}$ if and only if $v$ is dominated by 
  $\domset$.
  The idea of the reduction is to efficiently check the latter 
  property.

  The algorithm works as follows:
  Let $v\in\zo^n$ be an instance of~$L$ given as input.
  If $v\in g$ holds for some $g\in\domset$, the algorithm rejects~$v$ 
  and halts.
  Otherwise, it queries the SD-oracle on the instance $(A(\Un_{2^g}), 
  A(\Un_{2^g} \cup \{v\}))$ for each $g\in\domset$.
  If the oracle claims that all queries are yes-instances, our 
  algorithm accepts, and otherwise, it rejects.

  First note that distributions of the form
  $
    A
    \big(
    \Un_{2^g}
    \big)
  $
  and
  $
    A
    \big(
    \Un_{2^g}
    \cup\{v\}
    \big)
  $
  can be be sampled by using polynomial-size circuits, and so they 
  form syntactically correct instances of the SD-problem:
  The information about $A$, $g$, and $v$ is hard-wired into these 
  circuits, the input bits of the circuits are used to produce a 
  sample from $\Un_{2^g}$, and they serve as internal randomness 
  of~$A$ in case~$A$ is a randomized algorithm.

  It remains to prove the correctness of the reduction.
  If $v\in L$, we have for all $g\in\domset\subseteq {\overline L}$ 
  that $v\not\in g$ and that the statistical distance of the query corresponding 
  to $g$ is at least $\Delta=1-(e_s+e_c)$ by Lemma~\ref{lemma:yes}.
  Thus all queries that the reduction makes satisfy the promise of the 
  SD-problem and the oracle answers the queries correctly, leading our 
  reduction to accept.
  On the other hand, if $v\not\in L$, then, since $\domset$ is a 
  dominating set of~$\overline{L}$ with respect to the hypergraph 
  tournament~$S$, there is at least one $g\in \domset$ so that $v\in 
  g$ or $S(g\cup\{v\})=v$ holds.
  If $v\in g$, the reduction rejects.
  The other case implies that the statistical distance between 
  $A(\Un_{2^g})$ and $A(\Un_{2^g}\cup\{v\})$ is at most $\delta$.
  The query corresponding to this particular~$g$ therefore satisfies 
  the promise of the SD-problem, which means that the oracle answers 
  correctly on this query and our reduction rejects.
\end{proof}

\subsection{Information-theoretic arguments}
\label{sec:info1}

We now prove Lemma~\ref{lemma:pinsker}.
The proof uses the \emph{Kullback--Leibler divergence} as an 
intermediate step.
Just like the statistical distance, this notion measures how similar 
two distributions are, but it does so in an information-theoretic way 
rather than in a purely statistical way.
In fact, it is well-known in the area that the Kullback--Leibler 
divergence and the mutual information are almost interchangeable in a 
certain sense.
We prove a version of this paradigm formally in Lemma~\ref{lemma:KL 
bound} below; then we prove Lemma~\ref{lemma:pinsker} by bounding the 
statistical distance in terms of the Kullback--Leibler divergence 
using standard inequalities.

We introduce some basic information-theoretic notions.
The \emph{Shannon entropy} $H(X)$ of a random variable~$X$ is
\begin{align*}
  H(X) = \Exp_{x\sim X} \log\left(\frac{1}{\Pr(X=x)}\right)\,.
\end{align*}
The conditional Shannon entropy $H(X|Y)$ is
\begin{align*}
  H(X|Y) &= \Exp_{y\sim Y} H(X|Y=y)\\
  &= \Exp_{y\sim Y} \sum_x \Prc{X=x}{Y=y} \cdot
  \log\left(\frac{1}{\Prc{X=x}{Y=y}}\right)\,.
\end{align*}
The \emph{mutual information} between~$X$ and~$Y$ is 
$I(X:Y)=H(X)-H(X|Y)=H(Y)-H(Y|X)$.
Note that $I(X:Y) \leq\log|\supp X|$, where $|\supp X|$ is the size of 
the support of $X$.
The \emph{conditional mutual information} can be defined by the 
\emph{chain rule of mutual information}
$I(X:Y\;|\;Z)=I(X:YZ) - I(X:Z)$.
If $Y$ and $Z$ are independent, then a simple calculation reveals that 
$I(X:Y)\leq I(X:Y\;|\;Z)$ holds.

We now establish a bound on the Kullback--Leibler divergence.
The application of Lemma~\ref{lemma:pinsker} only uses $\Sigma=\zo$.
The proof does not become more complicated for general~$\Sigma$, and 
we will need the more general version later in this paper.
\begin{lemma}\label{lemma:KL bound}
  Let $t\in\N$ and let $X_1,\dots,X_t$ be independent distributions on some 
  finite set $\Sigma$, and let $X=X_1,\dots,X_t$.
  Then, for all randomized mappings $f\colon\Sigma^t\to\zo^*$,
  we have the following upper bound on the expected value of the 
  Kullback--Leibler divergence:  \begin{align*}
    \Exp_{j\sim\Un_{[t]}}
    \Exp_{x\sim X_j}
    \KLdiv
    \Big(
    f
    \big(
    X
    \big)
    \;||\;
    f
    \big(
    X
    \vert_{j\leftarrow x}
    \big)
    \Big)
    \leq
    \frac{1}{t}\cdot I\big(f(X) : X\big)\,.
  \end{align*}
\end{lemma}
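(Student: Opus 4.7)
The plan is to recast the expected KL divergence as a sum of mutual informations and then telescope via the chain rule, exploiting independence of the $X_j$'s to collapse the bound to $\frac{1}{t}I(f(X):X)$.

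I would set $Y \doteq f(X)$ and begin with the observation that independence of the $X_j$'s makes fixing $X_j = x$ indistinguishable from replacing the $j$-th coordinate by $x$, so $f(X\vert_{j \leftarrow x})$ is distributed exactly as $Y$ conditioned on $X_j = x$. The standard identity $I(Y:X_j) = \Exp_{x \sim X_j} \KLdiv\big(P_{Y \mid X_j = x}\,\|\,P_Y\big)$ then identifies the inner expectation in the lemma with $I(Y:X_j)$. The direction of $\KLdiv$ produced by this identity is conditional-on-the-left, marginal-on-the-right; reading the excerpt's direction strictly would be a notational discrepancy (in particular, the reverse divergence can be infinite whenever conditioning on $X_j = x$ kills some of the support of $Y$), but the downstream application in Lemma~\ref{lemma:pinsker} bounds statistical distance via Pinsker's inequality, which is symmetric in its two arguments, so either convention propagates without loss.

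To finish, I would apply the chain rule of mutual information $I(Y : X_1, \dots, X_t) = \sum_{j=1}^t I(Y : X_j \mid X_{<j})$ and dominate each marginal term $I(Y : X_j)$ by the corresponding chain-rule summand $I(Y : X_j \mid X_{<j})$. Writing $I(Y : X_j) = H(X_j) - H(X_j \mid Y)$ and $I(Y : X_j \mid X_{<j}) = H(X_j \mid X_{<j}) - H(X_j \mid Y, X_{<j})$, independence gives $H(X_j \mid X_{<j}) = H(X_j)$ while conditioning-reduces-entropy gives $H(X_j \mid Y, X_{<j}) \leq H(X_j \mid Y)$, so $I(Y : X_j) \leq I(Y : X_j \mid X_{<j})$. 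Summing over $j$, inserting into the chain rule, and dividing by $t$ yields $\Exp_j I(Y : X_j) \leq \frac{1}{t} I(Y : X)$, which is the claimed bound. The main obstacle is precisely this last dominance step: without independence of the $X_j$'s, the unconditional $I(Y : X_j)$ terms could not be bounded by their conditional counterparts in the chain rule, and the telescoping collapse to $I(Y : X)$ would fail.
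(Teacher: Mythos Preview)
Your proposal is correct and follows essentially the same route as the paper: both identify the inner expectation with $I(f(X):X_j)$ (the paper re-derives this identity by expanding the divergence explicitly, while you invoke it directly), then bound $I(f(X):X_j)\le I(f(X):X_j\mid X_1\dots X_{j-1})$ using independence and telescope via the chain rule. Your remark about the $\KLdiv$ direction is also on point: the paper's own first equality unfolds the divergence with the conditional distribution weighting the sum and the marginal in the denominator, i.e.\ exactly the conditional-on-the-left, marginal-on-the-right quantity you compute.
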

\begin{proof}
  The result follows by a basic calculation with entropy notions.
  The first equality is the definition of the Kullback--Leibler 
  divergence, which we rewrite using the logarithm rule
  $\log(a/b)=\log(1/b)-\log(1/a)$ and the linearity of expectation:
  \begin{align*}
    &\Exp_{j}
    \Exp_{x}
    \KLdiv
    \Big(
      f
      \big(
      X
      \big)
      \;||\;
      f
      \big(
      X
      \vert_{j\leftarrow x}
        \big)
    \Big)\\
    &=
    \Exp_{j}
    \Exp_{x}
    \sum_z \log \left(\frac{
      \Pr(f(
      X
      \vert_{j\leftarrow x}
        )=z)
      }{
      \Pr(f( X )=z)
      }\right)
    \cdot
      \Pr\Big(f(
      X
      \vert_{j\leftarrow x}
      )=z\Big)\\
    &=
    \Exp_{j}
    \sum_z \log \left(\frac{
        1
      }{
      \Pr(f( X )=z)
      }\right)
    \cdot
    \Exp_{x}
      \Pr\Big(f(
      X
      \vert_{j\leftarrow x}
      )=z\Big)\\
    &\quad -
    \Exp_{j}
    \Exp_{x}
    \sum_z \log \left(\frac{
        1}{
      \Pr(f(
      X
      \vert_{j\leftarrow x}
        )=z)
      }
      \right)
    \cdot
      \Pr\Big(f(
      X
      \vert_{j\leftarrow x}
      )=z\Big)
    \,.
    \end{align*}
    As $
    \Exp_{x}
    \Pr\big(f(
    X
    \vert_{j\leftarrow x}
    )=z\big)
    = \Pr(f(X)=z)$, both terms of the sum above are entropies, and we 
    can continue the calculation as follows:
    \begin{align*}
    \dots&=
    H(f(X))
    -
    \Exp_{j}
    \Exp_{x}
    H(f(X) | X_j=x)
    &\text{(definition of entropy)}\\
    &=
    H(f(X))
    -
    \Exp_{j}
    H(f(X) | X_j)
    &\text{(definition of conditional entropy)}\\
    &=
    \Exp_{j}
    I(f(X):X_j)
    &\text{(definition of mutual information)}\\
    &\leq\frac{1}{t} \cdot \sum_{j\in[t]}
    I\Big(f(X):X_j \;\big\vert\; X_1 \dots X_{j-1}\Big)
    &\text{(by independence of $X_j$'s)}\\
    &= \frac{1}{t} \cdot I(f(X):X)\,.
    &\text{(chain rule of mutual information)}
  \end{align*}
\end{proof}
We now turn to the proof of Lemma~\ref{lemma:pinsker}, where 
we bound the statistical distance in terms of the Kullback--Leibler 
divergence.
\begin{proof}[of Lemma~\ref{lemma:pinsker}]
  We observe that $I(f(X):X) \leq \log |\supp f(X)| \leq \epsilon t$, 
  and so we are in the situation of Lemma~\ref{lemma:KL bound} with 
  $\Sigma=\zo$.
  We first apply the triangle inequality to the left-hand side 
  of~\eqref{eq:noise sensitivity}.
  Then we use Pinsker's inequality \parencite[Lemma~11.6.1]{cover2012elements} 
  to bound the statistical distance in terms of the Kullback--Leibler 
  divergence, which we can in turn bound by~$\epsilon$ using Lemma~\ref{lemma:KL 
    bound}.
  \begin{align*}
    &\Exp_{j\sim\Un_{[t]}}
    d
    \Big(
      f
      \big(
      X
      \vert_{j\leftarrow 0}
      \big)
      \;,\;
      f
      \big(
      X
      \vert_{j\leftarrow 1}
      \big)
    \Big)\\
    &\leq
    \Exp_{j\sim\Un_{[t]}}
    d
    \Big(
      f
      \big(
      X
      \big)
      \;,\;
      f
      \big(
      X
      \vert_{j\leftarrow 0}
      \big)
    \Big)
    & \text{(triangle inequality)}\\
    &+
    \Exp_{j\sim\Un_{[t]}}
    d
    \Big(
      f
      \big(
      X
      \big)
      \;,\;
      f
      \big(
      X
      \vert_{j\leftarrow 1}
      \big)
    \Big)\\
    &=
    2 \cdot
    \Exp_{j\sim\Un_{[t]}}
    \Exp_{x\sim X_j}
    d
    \Big(
      f
      \big(
      X
      \big)
      \;,\;
      f
      \big(
      X
      \vert_{j\leftarrow x}
      \big)
    \Big)\\
    &\leq
    2\cdot
    \Exp_{j}
    \Exp_{x}
    \sqrt{
      \frac{\ln 2}{2}\cdot
    \KLdiv
    \Big(
      f
      \big(
      X
      \big)
      \;||\;
      f
      \big(
      X
      \vert_{j\leftarrow x}
      \big)
    \Big)}
    & \text{(Pinsker's inequality)}\\
    &\leq
    2 \cdot
    \sqrt{
      \frac{\ln 2}{2}\cdot
    \Exp_{j}
    \Exp_{x}
    \KLdiv
    \Big(
      f
      \big(
      X
      \big)
      \;||\;
      f
      \big(
      X
      \vert_{j\leftarrow x}
      \big)
    \Big)}
    & \text{(Jensen's inequality)}\\
    &\leq 2\cdot \sqrt{\frac{\ln 2}{2} \cdot \epsilon}=\delta\,. & 
    \text{(Lemma~\ref{lemma:KL bound})}
  \end{align*}
  The equality above uses the fact that $X_j$ is the uniform 
  distribution on~$\zo$.
\end{proof}

\section{Extension: Ruling out OR-compressions of size 
\texorpdfstring{$\bm{O(t \log t)}$}{O(t log t)}}
\label{sec:proof2}

In this section we tweak the proof of Theorem~\ref{thm main} so 
that it works even when the~$t$ instances of $L$ are mapped to an 
instance of $L'$ of size at most $O(t\log t)$.
The drawback is that we cannot handle positive constant error 
probabilities for randomized relaxed OR-compression anymore.
For simplicity, we restrict ourselves to \emph{deterministic} relaxed 
OR-compressions of size $O(t\log t)$ throughout this section.

\begin{theorem}[$O(t \log t)$-compressive version of Drucker's 
  theorem]\label{thm main tlogt}
  \mbox{}\\
  Let $L,L'\subseteq\zo^*$ be languages.
  Let $t=t(n)>0$ be a polynomial.
  Assume there exists a $\Ppoly$-algorithm
  \begin{align*}
    A \colon \binom{\zo^n}{\leq t}\to\zo^{O(t \log t)}
  \end{align*}
  such that, for all $x \in \binom{\zo^n}{\leq t}$,
  \begin{itemize}
    \item if $|x\cap L|=0$, then $A(x)\in \overline{L'}$, and
    \item if $|x\cap L|=1$, then $A(x)\in L'$.
  \end{itemize}

  Then $L\in\NPpoly \cap \coNPpoly$.
\end{theorem}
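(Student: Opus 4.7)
The plan is to follow the three-phase structure of the proof of Theorem~\ref{thm main} in Section~\ref{sec:proof}: the information-theoretic noise-sensitivity lemma, the hypergraph-tournament selector, and the dominating-set construction feeding into an $\SD$-based reduction. Only the first phase needs substantial change in order to accommodate the weaker compression ratio of $O(\log t)$ per element.

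Two features of the hypothesis of Theorem~\ref{thm main tlogt} give us significant flexibility. First, since $A$ is deterministic with perfect soundness and completeness, the analogue of Lemma~\ref{lemma:yes} holds with the maximal gap $\Delta = 1$. Second, since Theorem~\ref{theorem:SD easy} tolerates polynomially small gaps $\Delta - \delta \geq 1/\poly(n)$, the selector only needs to produce $\delta \leq 1 - 1/\poly(n)$, rather than $\delta$ bounded by a constant below~$1$. Given such a selector, the outer framework (the hypergraph-tournament construction, the application of Lemma~\ref{lem:domset} to extract a dominating set of size $\poly(n)$, and the $\Ppoly$-reduction to $\SD$) proceeds exactly as in the proof of Theorem~\ref{thm main}.

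The central obstacle is the information-theoretic lemma. A direct invocation of Lemma~\ref{lemma:pinsker} with compression ratio $\epsilon = O(\log t)$ yields the useless bound $\sqrt{2\ln 2 \cdot \epsilon} \gg 1$, and even the sharper observation that $I(A(X):X) \leq H(X) = t$ for uniform $X \in \zo^t$ only brings $\epsilon$ down to~$1$ and the sensitivity to $\sqrt{2\ln 2} \approx 1.18$, still trivial. The root cause is the factor-of-two loss from the triangle-inequality step in the proof of Lemma~\ref{lemma:pinsker}, which is intrinsically tight for the uniform distribution. The plan to break past this barrier is to feed $A$ non-uniform product distributions on $\zo^t$: a $p$-biased distribution has per-coordinate entropy $H(p)$ and thus shrinks the amortised mutual-information contribution per coordinate to $H(p)$, reducing the noise sensitivity to $O(\sqrt{H(p)})$. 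The price is that the gap of Lemma~\ref{lemma:yes} drops to $\Theta(p)$, because a specific yes-instance only appears in a sampled subset with probability~$p$; so both sides of the inequality $\delta < \Delta$ shrink together.

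Recovering a polynomial separation $\Delta - \delta \geq 1/\poly(n)$ out of this balancing act is the main remaining challenge. A natural approach is to run $A$ in parallel on $k$ independent $p$-biased samples and to concatenate the outputs, which boosts the effective gap to $\Theta(\min(kp,1))$ while the sensitivity scales only as $\sqrt{k \cdot H(p)}$; a careful polynomial choice of~$p$ and~$k$ should then yield $\delta \leq \Delta - 1/\poly(n)$. The routine bookkeeping (checking that the parallel composition still fits the formal shape of a relaxed OR-compression accepting at most~$t$ inputs, and that the $\SD$ instances produced by the reduction are samplable by polynomial-size circuits) mirrors the construction in Section~\ref{sec:proof}.
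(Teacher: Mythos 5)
Your plan identifies the right obstacle (Pinsker's inequality becomes vacuous once the Kullback--Leibler divergence exceeds $\frac{1}{2\ln 2}$, which happens already at $\epsilon\approx 0.72$, let alone at $\epsilon = O(\log t)$), and your instinct to change the input distribution so that $\Pr(X_j = x)$ is small is in the right spirit. But the specific route you sketch does not close, and there is a key tool you are missing.

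First, the claimed noise-sensitivity bound of $O(\sqrt{H(p)})$ for $p$-biased inputs is not actually what Lemma~\ref{lemma:KL bound} plus Pinsker gives. Lemma~\ref{lemma:KL bound} bounds the \emph{expectation over $x\sim X_j$} of the divergence; with a $p$-biased $X_j$ that expectation puts weight $1-p$ on $x=0$ and only weight $p$ on $x=1$. You therefore only get $\Exp_j\KLdiv(f(X)\,\|\,f(X|_{j\leftarrow 1})) \leq H(p)/p \approx \log(1/p)$, which sends the Pinsker bound to something far larger than~$1$ as $p\to 0$. The triangle-inequality step through $f(X)$ thus blows up on exactly the term you most need, and the ``both sides shrink together'' intuition does not hold. (The claim that the gap of Lemma~\ref{lemma:yes} drops to $\Theta(p)$ is also off: if you set up the SD query as a comparison between the two \emph{conditional} distributions --- $v$ excluded versus $v$ included --- the gap remains exactly $1$ for a deterministic, error-free $A$, independent of $p$.) The parallel-composition idea is described only at the level of ``a careful polynomial choice of $p$ and $k$ should work,'' but since the problem is not a polynomial bookkeeping issue but a genuinely divergent Pinsker bound, I do not see how that step goes through.

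The paper's actual route replaces Pinsker by \emph{Vajda's inequality}, $d(P,Q)\leq 1-\exp(-1-\KLdiv(P\,\|\,Q))$, which remains strictly below $1$ for any finite divergence; with $\KLdiv = O(\log t)$ one gets $d\leq 1-1/\poly(t)$. This is the missing ingredient. To make the triangle-inequality overhead negligible compared to that $1/\poly(t)$ gap, the paper works over a large alphabet $\Sigma = [t^C]$: the hyperedge $e$ is split into $t$ blocks of size $|\Sigma|$, $X_e$ picks one element per block uniformly, and the extra triangle term is bounded by $d(X_e|_{X_j\neq x}, X_e) \leq \Pr(X_j = x) = 1/|\Sigma|$, which is $1/\poly(t)$ for $C$ large. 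This is the same role your $p$ was meant to play, but done symmetrically (every value of $X_j$ has the same small probability), which is what makes the KL control work out. Everything downstream --- the hypergraph tournament $S$, the dominating set of size $t|\Sigma|\log|V|=\poly(n)$ from Lemma~\ref{lem:domset}, and the reduction to $\SD$ with $\Delta=1$ and $\delta=1-1/\poly(t)$ via Theorem~\ref{theorem:SD easy} --- then proceeds as you anticipated, just with hyperedges of size $t|\Sigma|$ rather than $t$.
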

This is Theorem~7.1 in \textcite{Druckerfull}.
The main reason why the proof in \S\ref{sec:proof} breaks down for 
compressions to size $\epsilon t$ with $\epsilon=O(\log t)$ is that 
the bound on the statistical distance in Lemma~\ref{lemma:pinsker} 
becomes trivial.
This happens already when $\epsilon \geq \frac{1}{2 \ln 2}\approx 
0.72$.
On the other hand, the bound that Lemma~\ref{lemma:KL bound} gives for 
the Kullback--Leibler divergence remains non-trivial even for 
$\epsilon=O(\log t)$.
To see this, note that the largest possible divergence
between $f(X)$ and $f(X\vert_{j\leftarrow x})$, that is, the 
  divergence without the condition on the mutual information between 
  $f(X)$ and $X$,
is $t \cdot \log |\Sigma|$, and the bound that Lemma~\ref{lemma:KL 
  bound} yields for $\epsilon=O(\log t)$ is logarithmic in that.

Inspecting the proof of Lemma~\ref{lemma:pinsker}, we realize that the 
loss in meaningfulness stems from Pinsker's inequality, which becomes 
trivial in the parameter range under consideration.
Luckily, there is a different inequality between the statistical 
distance and the Kullback--Leibler divergence, Vajda's inequality, 
that still gives a non-trivial bound on the statistical distance when 
the divergence is $\geq\frac{1}{2\ln 2}$.
The inequality works out such that if the divergence is logarithmic, 
then the statistical distance is an inverse polynomial away from~$1$.
We obtain the following analogue to Lemma~\ref{lemma:pinsker}.
\begin{lemma}\label{lemma:vajda}
  Let $t\in\N$ let $X_1,\dots,X_t$ be independent uniform 
  distributions on some finite set $\Sigma$, and write 
  $X=X_1,\dots,X_t$.
  Then, for all randomized mappings $f\colon\Sigma^t\to\zo^*$ with $I\big(f(X) : 
  X\big)\leq O(t \cdot \log t)$, we have
  \begin{align}\label{eq:vajda}%
    \Exp_{j\sim\Un_{[t]}}
    \Exp_{x\sim X_j}
    d
    \Big(
    f
    \big(
    X
    \vert_{X_j\neq x}
    \big)
    \;,\;
    f
    \big(
    X
    \vert_{X_j=x}
    \big)
    \Big)
    \leq
    1-\frac{1}{\poly(t)}+\frac{1}{|\Sigma|}
    \,.
  \end{align}
\end{lemma}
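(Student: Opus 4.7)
The plan is to follow the scaffolding of the proof of Lemma~\ref{lemma:pinsker}, with two substantive modifications. First, the triangle-inequality step that reduced the target quantity to $d\bigl(f(X),f(X\vert_{j\leftarrow x})\bigr)$ must be adapted to the new conditional formulation via a mixture identity for $f(X)$. Second, Pinsker's inequality, which becomes vacuous once the Kullback--Leibler divergence exceeds $\tfrac{1}{2\ln 2}$, must be replaced by Vajda's inequality, which continues to give meaningful bounds when the divergence is of order $\log t$.

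The reduction step exploits the fact that $X_j$ is uniform on $\Sigma$ and independent of the other coordinates, which yields the mixture decomposition
\begin{align*}
  f(X) \;=\; \tfrac{1}{|\Sigma|}\,f\bigl(X\vert_{X_j=x}\bigr) \;+\; \tfrac{|\Sigma|-1}{|\Sigma|}\,f\bigl(X\vert_{X_j\neq x}\bigr)\,.
\end{align*}
Writing $P=f(X\vert_{X_j=x})$, $Q=f(X\vert_{X_j\neq x})$, and $R=f(X)$, this identity gives $d(R,Q)=\tfrac{1}{|\Sigma|}\,d(P,Q)$, and combining it with the triangle inequality $d(P,Q)\leq d(P,R)+d(R,Q)$ rearranges to $d(P,Q)\leq\tfrac{|\Sigma|}{|\Sigma|-1}\,d(P,R)$. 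Since $X_j$ is uniform, $P=f(X\vert_{X_j=x})$ is the same distribution as $f(X\vert_{j\leftarrow x})$ from Lemma~\ref{lemma:KL bound}, so it remains to bound the expected statistical distance between $R=f(X)$ and $P=f(X\vert_{j\leftarrow x})$.

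For this bound, Vajda's inequality asserts that $\KLdiv(R\|P)\geq \log\tfrac{1}{1-d(R,P)}-O(1)$, equivalently $d(R,P)\leq 1-2^{-\KLdiv(R\|P)-O(1)}$. The right-hand side is concave in $\KLdiv(R\|P)$, so Jensen's inequality lets us pull the expectation over $j$ and $x\sim X_j$ inside this bound, and Lemma~\ref{lemma:KL bound} controls the resulting expected divergence by $I(f(X):X)/t=O(\log t)$. Therefore $\Exp_j\Exp_x d(R,P)\leq 1-2^{-O(\log t)}=1-1/\poly(t)$, and multiplying by $\tfrac{|\Sigma|}{|\Sigma|-1}$ from the reduction step gives the bound $1-1/\poly(t)+\tfrac{1}{|\Sigma|-1}$, from which the claimed inequality follows after a trivial adjustment of the $\poly(t)$ factor to absorb the gap between $\tfrac{1}{|\Sigma|-1}$ and $\tfrac{1}{|\Sigma|}$.

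The main obstacle is locating the right form of Vajda's inequality: one needs a version that is concave in the KL divergence (so that Jensen's inequality applies as cleanly as in Lemma~\ref{lemma:pinsker}) and whose bound on the statistical distance degrades only as $1-2^{-O(D)}$, so that an $O(\log t)$ divergence still leaves a $1/\poly(t)$ gap from $1$. Pinsker's inequality fails on both counts in this regime, which is exactly why the proof of Lemma~\ref{lemma:pinsker} breaks down here. Once the right form of Vajda's inequality is in hand and the additive constants are tracked carefully through the chain, the remaining pieces—the mixture reduction and the appeal to Lemma~\ref{lemma:KL bound}—are straightforward adaptations of the argument already used for Lemma~\ref{lemma:pinsker}.
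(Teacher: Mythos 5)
Your proposal is correct and follows essentially the same route as the paper's proof: Vajda's inequality in place of Pinsker's, Jensen's inequality (concavity in the divergence) to move the expectation inside, and Lemma~\ref{lemma:KL bound} to bound the expected Kullback--Leibler divergence by $O(\log t)$, yielding $1-1/\poly(t)$. The only difference is bookkeeping in the conditioning step: the paper bounds $d\big(f(X\vert_{X_j\neq x}),f(X)\big)\leq\Pr(X_j=x)=1/|\Sigma|$ and adds via the triangle inequality, whereas your mixture identity gives the multiplicative factor $|\Sigma|/(|\Sigma|-1)$ and hence an additive $1/(|\Sigma|-1)$ instead of $1/|\Sigma|$ --- harmless, since when $|\Sigma|$ is at most the relevant $\poly(t)$ the stated bound already exceeds $1$ and is trivial, and otherwise the gap $1/(|\Sigma|(|\Sigma|-1))$ is absorbed into the $1/\poly(t)$ term, exactly as you indicate.
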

The notation $X\vert_{X_j\neq x}$ refers to the random variable that 
samples $x_i\sim X_i=\Un_{\Sigma}$ independently for each $i\neq j$ as 
usual, and that samples $x_j$ from the distribution $X_j$ conditioned 
on the event that $X_j\neq x$, that is, the distribution 
$\Un_{\Sigma\setminus\{a\}}$.
The notation $X\vert_{X_j=x}=X\vert_{j\leftarrow x}$ is as before, 
that is, $x_j=x$ is fixed.

We defer the proof of the lemma to the end of this section and discuss 
now how to use it to obtain the stronger result for $O(t \log t)$ 
compressions.
First note that we could not have directly used 
Lemma~\ref{lemma:vajda} in place of Lemma~\ref{lemma:pinsker} in the 
proof of the main result, Theorem~\ref{thm main}.
This is because for $\Sigma=\zo$, the right-hand side 
of~\eqref{eq:vajda} becomes bigger than $1$ and thus trivial.
In fact, this is the reason why we formulated Lemma~\ref{lemma:KL 
bound} for general $\Sigma$.
We need to choose $\Sigma$ with $|\Sigma| = \poly(t)$ large enough to 
get anything meaningful out of~\eqref{eq:vajda}.

\subsection{A different hypergraph tournament}
To be able to work with larger $\Sigma$, we need to define the 
hypergraph tournament in a different way; not much is changing on a 
conceptual level, but the required notation becomes a bit less 
natural.
We do this as follows.
\begin{lemma}\label{lemma:tlogt selector}
  Let $A\colon\binom{\zo^n}{\leq t}\to\zo^{\epsilon t}$.
  There exists a large enough constant $C\in\N$ such that
  with $\Sigma=[t^C]$ we have:
  For all $e=e_1 \dotcup e_2 \dotcup \dots \dotcup e_{t} 
  \subseteq\zo^n$ with $|e_i|=|\Sigma|$,
  there exists an element $v\in e$ so that
  \begin{align}\label{eq:tlogt selector}
    \sdist
    \Big(
    A
    \big(
    X_e \vert_{v\not\in X_e}
    \big)
    \;,\;
    A
    \big(
    X_e \vert_{v\in X_e}
    \big)
    \Big)
    \leq
    1-\frac{1}{\poly(t)}
    \,,
  \end{align}
  where~$X_e$ is the distribution that samples the $t$-element set 
  $\set{\Un_{e_1},\dots,\Un_{e_t}}$, and $X_e\vert_{E}$ is the 
  distribution $X_e$ conditioned on the event~$E$.
\end{lemma}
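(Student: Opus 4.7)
The plan is to mimic the passage from Lemma~\ref{lemma:pinsker} to Lemma~\ref{lemma:selector exists}, but with the alphabet $\Sigma=[t^C]$ in place of $\zo$ and with Lemma~\ref{lemma:vajda} in place of Lemma~\ref{lemma:pinsker}. The disjointness of the blocks $e_1,\dots,e_t$ is what lets us encode the set-valued samples as tuples over $\Sigma^t$, which is exactly the shape that Lemma~\ref{lemma:vajda} consumes.

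More precisely: fix bijections $\phi_i\colon\Sigma\to e_i$ for each $i\in[t]$ (possible since $|e_i|=|\Sigma|$). Define the (deterministic) map
\begin{align*}
f\colon \Sigma^t\to\zo^{O(t\log t)}\,,\qquad f(x_1,\dots,x_t)\doteq A\bigl(\{\phi_1(x_1),\dots,\phi_t(x_t)\}\bigr)\,.
\end{align*}
Since $e_1,\dots,e_t$ are pairwise disjoint, the $t$ elements $\phi_i(x_i)$ are always distinct, so the input to $A$ is genuinely a $t$-element set. Let $X=(X_1,\dots,X_t)$ with each $X_i=\Un_\Sigma$ independent; then $\{\phi_1(X_1),\dots,\phi_t(X_t)\}$ is distributed as $X_e$. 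Because $f$ has output size $O(t\log t)$, we have $I(f(X):X)\leq \log|\supp f(X)|\leq O(t\log t)$, which is exactly the hypothesis of Lemma~\ref{lemma:vajda}.

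Applying Lemma~\ref{lemma:vajda} yields
\begin{align*}
\Exp_{j\sim\Un_{[t]}}\Exp_{x\sim X_j}\, d\!\Big(f(X\vert_{X_j\neq x})\,,\,f(X\vert_{X_j=x})\Big)\;\leq\;1-\tfrac{1}{\poly(t)}+\tfrac{1}{t^C}\,.
\end{align*}
By averaging, there is some pair $(j^\ast,x^\ast)$ attaining this bound. Set $v\doteq\phi_{j^\ast}(x^\ast)\in e_{j^\ast}\subseteq e$. Using the disjointness of the $e_i$'s once more, the event $v\in X_e$ is equivalent to $\phi_{j^\ast}(X_{j^\ast})=v$, i.e.\ $X_{j^\ast}=x^\ast$; likewise for the complementary event. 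Hence $A(X_e\vert_{v\in X_e})$ and $A(X_e\vert_{v\notin X_e})$ coincide in distribution with $f(X\vert_{X_{j^\ast}=x^\ast})$ and $f(X\vert_{X_{j^\ast}\neq x^\ast})$, respectively, giving~\eqref{eq:tlogt selector} up to the extra $1/t^C$ term.

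The only remaining step is to choose $C$ large enough that $1/t^C$ is dominated by half the $1/\poly(t)$ slack produced by Lemma~\ref{lemma:vajda}, so the sum is still of the form $1-1/\poly(t)$ (for a slightly worse polynomial). I expect this choice of $C$ to be the only delicate point: one has to track the polynomial that Lemma~\ref{lemma:vajda} yields from $I(f(X):X)\leq O(t\log t)$ and pick $C$ strictly larger than its degree. Everything else is a bookkeeping translation between the tuple-view of $X$ used by the information-theoretic lemma and the set-view of $X_e$ used in the statement.
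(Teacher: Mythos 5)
Your proof is correct and follows essentially the same route as the paper: define $f$ by composing $A$ with per-block bijections $\Sigma\to e_i$, invoke Lemma~\ref{lemma:vajda} on $f$, and take $v$ to be the image of a minimizing pair $(j,x)$, noting that the event $v\in X_e$ translates to $X_j=x$ because the blocks are disjoint. (You have also correctly read past a typo in the paper's proof, which cites Lemma~\ref{lemma:pinsker} where Lemma~\ref{lemma:vajda} is meant, and you make explicit the one point the paper leaves implicit, namely that $C$ must be chosen large relative to the degree of the $\poly(t)$ coming out of Lemma~\ref{lemma:vajda}, which in turn is fixed by the $O(t\log t)$ output size of $A$ and hence does not depend on $C$.)
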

For instance if $v\in e_1$, then
$X_e \vert_{v\not\in X_e}$ samples the $t$-element set $\set{
\Un_{e_1\setminus\{v\}},
\Un_{e_2},
\dots,
\Un_{e_t}
}$
and
$X_e \vert_{v\in X_e}$ samples the $t$-element set
$\set{
v,
\Un_{e_2},
\dots,
\Un_{e_t}
}$.
The proof of this lemma is analogous to the proof of 
Lemma~\ref{lemma:selector exists}.
\begin{proof}
  We choose $C$ as a constant that is large enough so that the 
  right-hand side of~\eqref{eq:vajda} becomes bounded 
  by~$1-1/\poly(t)$.
  Let $e_{ia}\in\zo^n$ for $i\in[t]$ and $a\in\Sigma$ be the 
  lexicographically $a$-th element of~$e_i$.
  We define the function $f\colon\Sigma^t \to \zo^{O(t\log t)}$ as follows:
  $f(a_1,\dots,a_t) \doteq A(e_{1a_1},\dots,e_{ta_t})$.
  Finally, we let the distributions $X_i$ be $X_i=\Un_{\Sigma}$ for 
  all $i\in[t]$.
  We apply Lemma~\ref{lemma:pinsker} to $f$ and obtain indices 
  $j\in[t]$ and $x\in \Sigma$ minimizing the statistical distance on 
  the left-hand side of~\eqref{eq:vajda}.
  Since $f(X_e\vert_{e_{jx}\not\in X_e})=A(X\vert_{X_j\neq x})$
  and $f(X_e\vert_{e_{jx}\in X_e})=A(X\vert_{X_j= x})$,
  we obtain the claim with $v\doteq e_{jx}$.
\end{proof}
\subsection{Proof of Theorem~\ref{thm main tlogt}}
\begin{proof}[of Theorem~\ref{thm main tlogt}]
  As in the proof of Theorem~\ref{thm main}, we construct a 
  deterministic $\Ppoly$ reduction from $L$ to a conjunction of 
  polynomially many instances of the statistical distance problem 
  $\SD$, but this time we let $D=1$ and $\delta=1-\frac{1}{\poly(t)}$ 
  be equal to the right-hand side of~\eqref{eq:tlogt selector}.
  Since there is a polynomial gap between $d$ and $D$, 
  Theorem\ref{theorem:SD easy} implies that $\SD$ is contained in the 
  intersection of $\NPpoly$ and $\coNPpoly$.
  Since the intersection is closed under polynomial disjunctions,
  we obtain $L\in\NPpoly\cap\coNPpoly$.
  Thus it remains to find such a reduction.

  To construct the advice at input length~$n$, we use 
  Lemma~\ref{lemma:tlogt selector} with $t=t(n)$, which guarantees that the 
  following hypergraph tournament $S\colon\binom{V}{|\Sigma| \cdot t}\to V$ with 
  $V=\overline{L}_n$ is well-defined:
  \begin{align*}
    S(e)
    &\doteq
    \min
    \setc*{
      v\in e
    }{
      d\paren[\Big]{
        A(X_e \vert_{v\not\in X_e})
        \;,\;
        A(X_e \vert_{v\in X_e})
      }
      \leq \delta
    }\,.
  \end{align*}
  We remark that if $|V|\leq |\Sigma|t=\poly(n)$, then we can use~$V$ 
  as the advice.
  Otherwise, the advice at input length~$n$ is the dominating set 
  $\domset\subseteq \binom{V}{\Sigma\cdot t-1}$ guaranteed by 
  Lemma~\ref{lem:domset}; in particular, its size is bounded by $t 
  \cdot|\Sigma|\cdot n=\poly(n)$.

  The algorithm for $L$ that uses $\SD$ as an oracle works as follows:  
  Let $v\in\zo^n$ be an instance of~$L$ given as input.
  If $v\in g$ holds for some $g\in\domset$, the reduction rejects~$v$ 
  and halts.
  Otherwise, for each $g\in\domset$, it queries the SD-oracle on the 
  instance
  $\big(
  A(X_e \vert_{v\not\in X_e}) \;,\; A(X_e \vert_{v\in X_e})
  \big)$
  with $e=g\cup\{v\}$.
  If the oracle claims that all queries are yes-instances, our 
  reduction accepts, and otherwise, it rejects.

  The correctness of this reduction is analogous to the proof 
  Theorem~\ref{thm main}:
  If $v\in L$, then Lemma~\ref{lemma:yes} guarantees that the 
  statistical distance of all queries is one, and so all queries will 
  detect this.
  If $v\in\overline{L}$, then since $\domset$ is a dominating set of 
  $S$, we have $v\in g$ or $S(g\cup\{v\})=v$ for some $g\in\domset$.
  The latter will be detected in the query corresponding to $g$ since 
  $\delta<D$.
  This completes the proof of the theorem.
\end{proof}

\subsection{Information-theoretic arguments}
\begin{proof}[of Lemma~\ref{lemma:vajda}]
  We use Vajda's inequality
  \parencite{fedotov2003refinements,reid2009generalised}
  instead of Pinsker's inequality to bound the statistical distance in 
  terms of the Kullback--Leibler divergence,
  which we in turn bound by the mutual information using 
  Lemma~\ref{lemma:KL bound} (with $\epsilon= C\cdot\log t$ for a 
  constant~$C$ large enough so that $I(f(X):X)\leq \epsilon t$ holds):
  \begin{align*}
    &\Exp_j\;\Exp_{x}
    d
    \Big(
      f
      \big(
      X
      \big)
      \;,\;
      f
      \big(
      X
      \vert_{j\leftarrow x}
      \big)
    \Big)
    \\
    &\leq
    \Exp_{j}
    \Exp_{x}
    \left(
      1
      -
      \exp
      \Big(
      -1
      -
      \KLdiv
      \big(
      f
      (
      X
      )
      \;||\;
      f
      (
      X
      \vert_{j\leftarrow x}
      )
      \big)
      \Big)
      \right)
    & \text{(Vajda's inequality)}\\
    &\leq
      1
      -
      \exp\left(-1-
    \Exp_{j}
    \Exp_{x}
      \KLdiv
      \Big(
      f
      \big(
      X
      \big)
      \;||\;
      f
      \big(
      X
      \vert_{j\leftarrow x}
        \big)
        \Big)
        \right)
    & \text{(Jensen's inequality)}\\
    &\leq 1- e^{-1-C\log t}= 1-1/\poly(t) & \text{(Lemma~\ref{lemma:KL 
        bound})}
  \end{align*}
  Now \eqref{eq:vajda} follows from the triangle inequality as 
  follows.
  \begin{align*}
    \Exp_{j}
    \Exp_{x}
    d
    \Big(
      f
      \big(
      X
      \vert_{X_j\neq x}
      \big)
      \;,\;
      f
      \big(
      X
      \vert_{X_j=x}
      \big)
    \Big)
    &\leq
    \Exp_{j}
    \Exp_{x}
    d
    \Big(
      f
      \big(
      X
      \vert_{X_j\neq x}
      \big)
      \;,\;
      f
      \big(
      X
      \big)
    \Big)
    \\
    &\quad+
    \Exp_{j}
    \Exp_{x}
    d
    \Big(
      f
      \big(
      X
      \big)
      \;,\;
      f
      \big(
      X
      \vert_{X_j=x}
      \big)
    \Big)
    \\
    &
    \leq
    \frac{1}{|\Sigma|}+1-\frac{1}{\poly(t)}
    \,.
  \end{align*}
  For this, note that a simple calculation from \eqref{def:statdist} 
  shows that
  \begin{align*}
    &d
    \big(
      f
      (
      X
      \vert_{X_j\neq x}
      )
      ,
      f
      (
      X
      )
    \big)
    \leq
    d
    \big(
      X
      \vert_{X_j\neq x}
      ,
      X
    \big)
    \\
    &\quad\leq
    \Pr(X_j\neq x)
    \cdot
    d
    \big(
      X
      \vert_{X_j\neq x}
      ,
      X
      \vert_{X_j\neq x}
    \big)
    +
    \Pr(X_j= x)
    \cdot
    d
    \big(
      X
      \vert_{X_j\neq x}
      ,
      X
      \vert_{X_j= x}
    \big)\\
    &\quad\leq
    \Pr(X_j\neq x) \cdot 0
    +
    \Pr(X_j = x) \cdot 1
    =
    \Pr(X_j = x)
  \end{align*}
  always holds, and the latter equals $\frac{1}{|\Sigma|}$
  since~$X_j$ is uniformly distributed on~$\Sigma$.
\end{proof}

\section{Extension: \texorpdfstring{$\bm{f}$}{f}-compression}

We end this paper with a small observation: Instead of OR-compressions 
or AND-compressions, we could just as well consider $f$-compressions for a 
Boolean function $f\colon\zo^t\to\zo$.
If the function $f$ is symmetric, that is, if $f(x)$ depends only on 
the Hamming weight of~$x$, then we can represent $f$ as a function 
$f\colon\set{0,\dots,t}\to\zo$.
We make the observation that Drucker's theorem applies to 
$f$-compressions whenever $f$ is a non-constant, symmetric function.
\begin{definition}
Let $f\colon\set{0,\dots,t}\to\zo$ be any function.
Then an $f$-compression of $L$ into $L'$ is a mapping
\begin{align*}
  A\colon\binom{\zo^n}{\leq t} \to \zo^{\epsilon t}\,,
\end{align*}
such that, for all $x\in \binom{\zo^n}{\leq t}$, we have $A(x)\in L'$ 
if and only if $f(|x\cap L|)=1$.
\end{definition}
Examples:
\begin{itemize}
  \item OR-compressions are $f$-compressions with $f(i)=1$ if and only 
    if $i>0$.
  \item AND-compressions are $f$-compressions with $f(i)=1$ if and 
    only if $i=t$.
  \item Majority-compressions are $f$-compressions with $f(i)=1$ if 
    and only if $i>t/2$.
  \item Parity-compressions are $f$-compressions with $f(i)=1$ if and 
    only if $i$ is odd.
\end{itemize}
We can apply Theorem~\ref{thm main} and~\ref{thm main tlogt} 
whenever $f$ is not a constant function.
\begin{lemma}
  Let $f\colon\set{0,\dots,t}\to\zo$ be non-constant.
  Then every $f$-compression for $L$ with size $\epsilon t$ can be 
  transformed into a compression for $L$ or for $\overline{L}$, in the 
  sense of Theorem~\ref{thm main} and with size bound at most 
  $2\epsilon t$.
\end{lemma}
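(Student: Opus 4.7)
I plan to transform the $f$-compression $A$ into a relaxed OR-compression (in the sense of Theorem~\ref{thm main}) of either $L$ or $\overline{L}$. Either choice suffices because the conclusion $L\in\NPpoly\cap\coNPpoly$ is closed under complementation. The transformation is by \emph{padding}: designate $t'$ of the $t$ input slots of $A$ to carry the real inputs $y_1,\dots,y_{t'}$, and hard-code the remaining $t-t'$ slots with fixed yes- and no-instances of $L$ (supplied as $\Ppoly$ advice at length $n$) so that the total yes-count with respect to $L$ lands exactly at a threshold of $f$.

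\textbf{Construction.} Since $f$ is non-constant, pick $1\leq i^*\leq t$ with $f(i^*-1)\neq f(i^*)$, and by swapping the output language $L'\leftrightarrow\overline{L'}$ if necessary, assume $f(i^*-1)=0$ and $f(i^*)=1$. When $i^*\leq t/2+1$, I build a relaxed OR-compression for $L$ by setting $t'=t-i^*+1$, fixing a set $S\subseteq L_n$ of exactly $i^*-1$ yes-instances of $L$, and outputting $A(\{y_1,\dots,y_{t'}\}\cup S)$. If all $y_i\in\overline{L}$ the yes-count equals $i^*-1$, so $A$ outputs $\overline{L'}$; if exactly one $y_i\in L$ the count equals $i^*$, so $A$ outputs $L'$. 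When $i^*>t/2+1$, I instead build a relaxed OR-compression for $\overline{L}$ with output language $\overline{L'}$ by setting $t'=i^*$, fixing a set $S'\subseteq\overline{L}_n$ of $t-i^*$ no-instances of $L$, and outputting $A(\{y_1,\dots,y_{t'}\}\cup S')$. If all $y_i\in L$ the yes-count of $L$ is $i^*$ so $A$ outputs $L'$, a no-instance of $\overline{L'}$; if exactly one $y_i\in\overline{L}$ the count is $i^*-1$ so $A$ outputs $\overline{L'}$, a yes-instance. In both cases, $t'\geq t/2$ by the case condition, so the output size $\epsilon t$ of $A$ satisfies $\epsilon t\leq 2\epsilon t'$, which is the claimed bound.

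\textbf{Main obstacle.} There is no deep obstacle; the argument is essentially bookkeeping. The only points worth checking are: Theorem~\ref{thm main} permits an arbitrary output language, so the relabelling $L'\leftrightarrow\overline{L'}$ used to convert a downward threshold into an upward one is harmless; at lengths $n$ with $L_n$ or $\overline{L}_n$ empty, $L\cap\zo^n$ is trivially decidable so the missing padding is never needed; and if an input $y_i$ happens to coincide with an element of the yes-padding set $S$, the OR-compression simply short-circuits to a hard-coded yes-instance of the target output language (also baked into advice), which is correct because any such $y_i$ is automatically a yes-instance of $L$. The factor of two in the size bound is exactly the price of reserving up to half the input slots for padding.
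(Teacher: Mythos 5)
Your overall route is the same as the paper's: exploit the symmetries $L\leftrightarrow\overline{L}$ and $L'\leftrightarrow\overline{L'}$ to normalize a transition of $f$, and pad the unused input slots with fixed instances stored in the advice, sacrificing at most half of the $t$ slots, which is exactly where the factor $2$ in the size bound comes from. Your Case 1 (transition in the lower half, padding with $i^*-1$ yes-instances of $L$) is correct, including the short-circuit treatment of collisions and the remark about advice-trivial lengths: there the $L$-count in the two promise cases is $i^*-1$ resp.\ $i^*$ no matter how many real inputs are actually supplied. (Minor point: what you need is not that $L_n$ is nonempty but that it contains enough distinct instances; if it has fewer than $t$, it fits wholesale into the advice, as the paper also notes.)

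Case 2, however, has a genuine gap. Theorem~\ref{thm main} requires soundness and completeness for \emph{every} $x\in\binom{\zo^n}{\leq t'}$, not only for sets of maximal size $t'$, and its proof really uses this: $A$ is run on $\Un_{2^g}$, a uniformly random subset of $g$, whose size ranges over all values below $t'$. In your Case 2 the $L$-count of $\set{y_1,\dots}\cup S'$ is just the number of real inputs lying in $L$, since the padding $S'\subseteq\overline{L}$ contributes nothing; so if the input set has $s<i^*$ elements, all of them in $L$ (i.e., all no-instances of $\overline{L}$), the count is $s$, about which nothing is known --- for $s=i^*-1$ we have $f(i^*-1)=0$, so $A$ outputs an element of $\overline{L'}$, a yes-instance of your output language, violating soundness. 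The fix is to pad \emph{adaptively with yes-instances of $L$} instead of with a fixed set of no-instances: on input $x$ with $|x|\leq t'=i^*$, feed $A$ the set $x$ together with $i^*-|x|$ yes-instances of $L$ taken from an advice list (of, say, $2t$ yes-instances) and chosen disjoint from $x$. Then the $L$-count is exactly $i^*$ whenever $x\cap\overline{L}=\emptyset$ and exactly $i^*-1$ whenever $|x\cap\overline{L}|=1$, for every input size, and the rest of your argument, including $t'=i^*\geq t/2$, goes through. For comparison, the paper folds this situation into a ``without loss of generality'' over four complemented views and only writes out the analogue of your Case 1; its complemented-input views require the same adaptive-padding care that your Case 2 is missing.
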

\begin{proof}
  Let $A$ be an $f$-compression from $L$ into $L'$.
  Then $A$ is also a $(1-f)$-compression from $L$ into 
  $\overline{L'}$,
  an $(f(t-i))$-compression from $\overline L$ into $L'$, and
  a $(1-f(t-i))$-compression from $\overline L$ into $\overline{L'}$.
  Since $f$ is not constant, at least one of these four views 
  corresponds to a function $f'$ for which there is an index $i\leq 
  t/2$ so that
  $f'(i)=0$ and $f'(i+1)=1$,
  holds.
  Assume without loss of generality that this holds already for~$f$.
  Then we define $A'\colon\binom{\zo^n}{\leq t-i}\to\zo^{\epsilon t}$ as 
  follows:
  \begin{align*}
    A'(\set{x_{i}, x_{i+1}, \dots, x_t})
    &\doteq
    A(\set{\top_1, \dots,\top_{i-1}, x_{i}, x_{i+1}, \dots, x_t})
    \,,
  \end{align*}
  where $\top_1,\dots,\top_{i-1}$ are arbitrary distinct yes-instances 
  of $L$.
  For the purposes of Theorem~\ref{thm main}, these instances 
  can be written in the non-uniform advice of $A'$.
  If this many yes-instances do not exist, then the language~$L$ is 
  trivial to begin with.
  To ensure that the $x_j$'s are distinct from the $\top_j$'s, we 
  actually store a list of $2t$ yes-instances $\top_j$ and inject 
  only~$i-1$ of those that are different from the $x_j$'s.
  
  $A'$ is just like $A$, except that $i-1$ inputs have already been 
  fixed to yes-instances.
  Then $A'$ is a compressive map that satisfies the following:
  If $|x\cap L|=0$ then $A'(x)\not\in L'$, and if $|x\cap L|=1$ then 
  $A'(x)\in L'$.
  Since the number of inputs has decreased to $t'=t-i\geq t/2$, the 
  new size of the compression is $\epsilon t \leq 2 \epsilon t'$ in 
  terms of $t'$.
\end{proof}

\paragraph{Acknowledgments.}
I would like to thank Andrew Drucker, Martin Grohe, and others for encouraging 
me to pursue the publication of this manuscript, David Xiao for pointing out 
Theorem~\ref{theorem:SD easy} to me, Andrew Drucker, D\'aniel Marx, and 
anonymous referees for comments on an earlier version of this paper, and Dieter 
van Melkebeek for some helpful discussions.

\printbibliography
\end{document}